\newtheorem{theorem}{Theorem}
\newtheorem{lemma}[theorem]{Lemma}
\newtheorem{proposition}[theorem]{Proposition}
\newtheorem{remark}[theorem]{Remark}
\newtheorem{problem}{Problem}
\newcommand{\M}{\ensuremath{\mathsf{M}}\xspace}
\newcommand{\x}{\ensuremath{\mathbf{x}}\xspace}
\begin{document}

\title{Univariate real root isolation in an extension field}

\author{
  Adam Strzebonski%
  \thanks{Wolfram Research Inc.,100 Trade Centre Drive, Champaign, IL 61820, U.S.A.
    Email: \texttt{adams (AT) wolfram.com}}
   \and
   Elias P.~Tsigaridas%
   \thanks{Computer Science Department, Aarhus University, Denmark.
     Email: \texttt{elias(at)cs.au.dk}}
}
\date{}
\maketitle

\begin{abstract}
  We present algorithmic, complexity and implementation results for
  the problem of isolating the real roots of a univariate polynomial
  in $B_{\alpha} \in L[y]$, where $L=\QQ(\alpha)$ is a simple
  algebraic extension of the rational numbers.
  We revisit two approaches for the problem. In the first
  approach, using resultant computations, we perform a reduction to a
  polynomial with integer coefficients and we deduce a bound of
  $\sOB(N^{10})$ for isolating the real roots of $B_{\alpha}$, where
  $N$ is an upper bound on all the quantities (degree and bitsize) of
  the input polynomials.
  In the second approach we isolate the real roots working directly on
  the polynomial of the input.  We compute improved separation bounds
  for the roots and we prove that they are optimal, under mild
  assumptions.  For isolating the real roots we consider a modified Sturm
  algorithm, and a modified version of \func{descartes}' algorithm
  introduced by Sagraloff.  For the former we prove a complexity bound
  of $\sOB(N^8)$ and for the latter a bound of $\sOB(N^{7})$.
  We implemented the algorithms in \func{C} as part of the core
  library of \mathematica and we illustrate their efficiency over
  various data sets.
  Finally, we present complexity results for the general case of the
  first approach, where the coefficients belong to multiple
  extensions.
\end{abstract}


{\bf Keywords} real root isolation, algebraic polynomial,
field extension, separation bounds, Sturm, Descartes' rule of sign

\section{Introduction}

Real root isolation is a very important problem in computational
mathematics.  Many algorithms are known for isolating the real roots
of a polynomial with integer or rational coefficients that are either
based solely on operations with rational numbers,
\cite{MoVrYa02,Dav:TR:85,ESY:descartes,RouZim:solve:03} and references
therein, or they follow a numerical, but certified approach,
\cite{Pan02jsc,Sch82} and references therein.  In this paper we
consider a variation of the problem in which the coefficients of the
polynomial are polynomial functions of a real algebraic number, that
is they belong to a simple algebraic extension of the rationals. 
\begin{framed}
  \begin{problem}
    \label{prob:main}
    Let $\alpha$ be a real algebraic number with isolating
    interval representation $\alpha \cong ( A, \interval{I})$, where
    $A=\sum_{i=0}^{m}{ a_i\, x^i}$, $\interval {I} = [ \rat{a}_1
    ,\rat{a_2}]$, $\rat{ a}_{1,2} \in \QQ$ and $\dg{ A} = m$ and
    $\bitsize{ A} = \tau$.
    Let $B_{\alpha} = \sum_{i=0}^{n}{ b_i(\alpha) \, y^i} \in
    \ZZ(\alpha)[y]$ be square-free, where $b_i(x) =
    \sum_{j=0}^{\eta_i}{ c_{i,j} \, x^j} \in \ZZ[x]$,
    $\bitsize{c_{i,j}} \leq \sigma$, and $\eta_i < m$, for $0 \leq i
    \leq d$.
    What is the Boolean complexity of isolating the real roots of $B_{\alpha}$?
  \end{problem}
\end{framed}
Rump \cite{rump-sigsam-1977}, see also \cite{rump76:algnum}, presented
an algorithm for the problem that is an extension of Collins and Loos
\cite{cl-issac-1976} algorithm for integral  polynomials.
Johnson \cite{Johnson-phd-91} presented and compared various
algorithms for Problem~\ref{prob:main}. He considered a norm based
algorithm that reduces the problem to root isolation of integral
polynomial (this is the approach that we consider in
Sec.~\ref{sec:RIC}) and extended three algorithms used for integral
polynomials, i.e. Sturm (we present it in Sec.~\ref{sec:sturm}), the
algorithm based on derivative sequence and Rolle's theorem
\cite{cl-issac-1976}, and the algorithm based on Descartes' rule of
sign \cite{ColAkr:descartes:76} (we present a modified version in
Sec.~\ref{Sec:BMD}).
Johnson and Krandick \cite{jk-issac-1997} modified the latter and
managed to replace exact arithmetic, when possible, with certified
floating point operations; a novelty that speeds up considerably the
computations. Along the same lines, Rouillier and Zimmermann
\cite{RouZim:solve:03} presented an optimal in terms of memory used
algorithm for integral polynomials that exploits adaptive
multiprecision techniques that could be used for
Problem~\ref{prob:main}, if we approximate the real algebraic number
up to a sufficient precision.
In a series of works \cite{eigenwillig-phd,EigenEtal:bitstream:05,ms-jsc-2010} a
bitstream version of Descartes' algorithm was introduced. The
coefficients of the input polynomial are considered to be real numbers
that we can approximate up to arbitrary precision.
We use the most recent version of this approach, which is due to Sagraloff
\cite{s-arxiv-isol-10}, to tackle Problem~\ref{prob:main}.
Last but not least, let us also mention the numerical algorithms due
to Pan \cite{Pan02jsc} and Sch\"onhage \cite{Sch82}, that could be also used 
if approximate $\alpha$ in our problem up to a sufficient precision.

Rioboo \cite{rioboo03:tfan:jsc} considered various symbolic algorithms
for operations with real algebraic numbers, based on quasi Sylvester
sequences. These algorithms could be used for
Problem~\ref{prob:main}, and they are closely connected with the Sturm
algorithm that we present (Sec.~\ref{sec:sturm}).  However, we
use different subalgorithms for sign evaluations and solving
polynomials. The focus in \cite{rioboo03:tfan:jsc} is on efficient
implementation of the real closure in \func{axiom}.

Problem~\ref{prob:main} is closely related to real root isolation of
of triangular systems and regular chains.
In \cite{cgy-jsc-2009,xz-cmwa-2006,xy-jsc-2002,lhlp-snc-2005}
algorithms and implementations are presented for isolating the real
roots of triangular polynomial systems, based on interval arithmetic
and the so-called {\em sleeve} polynomials. In the case of two
variables the problem at study is similar to Problem~\ref{prob:main}.
In this line of research the coefficients of the algebraic polynomial
are replaced with sufficiently refined intervals, hence obtaining
upper and lower bounds (i.e. a sleeve) for the polynomial.  Isolation
is performed using evaluations and exclusion predicates that involve
the non-vanishing of the derivative.  To our knowledge there is no
complexity analysis of the algorithms.  Neverthelss in
\cite{cgy-jsc-2009} evaluation bounds are presented, which are crucial
for the termination of the algorithm, based on separation bounds of
polynomial systems.  However, the systems used for the bounds involve
the derivative of the polynomial (this is needed for the exclusion
criterion), which is not the case for our approach.
In \cite{bclm-ascm-2009} the problem of real root isolation of 0-dim
square-free regular chains is considered. A generalization of
Vincent-Collins-Akritas (or Descartes) algorithm is used to isolate
the real roots of of polynomials with real algebraic numbers as
coefficients. This approach is similar to the direct strategy that we
study. To our knowledge the authors do not present a complexity
analysis since they focus on efficient algorithms and
implementation in \maple.

We revisit two approaches for isolating the real roots of a
square-free polynomial with coefficients in a simple algebraic
extension of the rational numbers. The first, indirect, approach
(Sec.~\ref{sec:RIC}), already presented in \cite{Johnson-phd-91},
is to find a polynomial with integer coefficients which is zero at all
roots of $B_{\alpha}$, isolate its real roots, and identify the
intervals which contain the roots of $B_{\alpha}$.  We compute
(aggregate) separation bounds for the resulting polynomial (Lem.~
\ref{lem:a-bounds}), that are slightly better than the ones in
\cite{rump-sigsam-1977}, and prove that the complexity of the
algorithm is $\sOB(N^{10})$, where $N$ is an upper bound on all the
quantities (degrees and bitsizes) of the input.
The second approach (Sec.~\ref{sec:Ba-bound}) is to isolate the roots
of the input polynomial directly, using either the Sturm's algorithm
or Sagraloff's modified Descartes algorithm. We analyze the worst-case
asymptotic complexity of the algorithms and we obtained a bound of
$\sOB(N^8)$ and $\sOB(N^7)$, respectively. We obtain these complexity
bounds by estimating improved separation bounds for the roots
(Sec.~\ref{sec:Ba-bound} and Lem.~\ref{lem:Ba-bounds}), that we also
prove that they are optimal (Sec.~\ref{Sec:Mignotte}).  The bounds are
better than the previously known ones \cite{rump76:algnum,Johnson-phd-91}
by a factor of $N$.  We empirically compare the performance of
the indirect approach and the direct approach based on Sagraloff's
modified Descartes algorithm.  The algorithms were implemented in
\func{C} as part of the core library of \mathematica, and we
illustrate their behavior on various datasets
(Sec.~\ref{sec:implementation}).
The complexity bounds that we present are many factors better that the
previously known ones. However, a fair and explicit comparison with
the bounds in \cite{Johnson-phd-91} is rather difficult, if possible
at all, since, besides the improved separation bounds that we present,
the complexity bounds of many sub-algorithms that are used have been
dramatically improved over the last 20 years, and it is not clear how
to take this into account in the comparison.

Finally, we present a generalization of the first approach to the
case where the input polynomials are univariate, but with coefficients
that belong to multiple extensions (Sec.~\ref{sec:multi}). We derive
(aggregate) separation bounds for this case
(Lem.~\ref{lem:a-multi-bounds}) and we sketch the overall complexity
of the algorithm. The bounds are single exponential with respect to
the number of extensions.


\paragraph*{Notation}
\OB means bit complexity and the \sOB-notation means that we
are ignoring logarithmic factors.
For $A = \sum_{i=1}^{d}{a_i x^i} \in \ZZ[x]$, $\dg{A}$ denotes its
degree.  \bitsize{A} denotes an upper bound on the bitsize of the
coefficients of $A$, including a bit for the sign.  For $\rat{a} \in
\QQ$, $\bitsize{ \rat{a}}\ge 1$ is the maximum bitsize of the
numerator and the denominator.

If $\alpha_1, \dots, \alpha_d$ are the distinct, possible complex,
roots  of $A$, then 
$\Delta_i= |\alpha_i - \alpha_{c_i}|$, where 
$\alpha_{c_i}$ is the roots closest to $\alpha_{i}$.
$\Delta(A) = \min_i\Delta_i(A)$ is the separation bound,
that is the smallest distance between two (real or complex,
depending on the context) roots of $A$.
By $\Sigma(A) = - \sum_{i=1}^{n}{\lg{\Delta_i(A)}}$,
we denote the numbers of bits needed to represent
isolating rational numbers for all the roots of $A$.

Given two polynomials, possible multivariate, $f$ and $g$, then
$\res_x(f, g)$ denotes their resultant with respect to $x$.

\section{Preliminaries}
\label{sec:prelim}

Real algebraic numbers are the real roots of univariate polynomials
with integer coefficients; let their set be $\ALG$.  We
represent them in the so-called {\em isolating interval
  representation}. 
If $\alpha \in \ALG$
then the representation consists of a square-free polynomial with
integer coefficients, $A \in\ZZ[x]$, that has $\alpha$ as a real root,
and an isolating interval with rational endpoints, $\interval{I} =
[\rat{a}_1, \rat{a}_2]$, that contains $\alpha$ and no other root of
the polynomial.  We write $\alpha \cong (A, \interval{I})$.  


The following proposition provides various bounds for the roots of a
univariate polynomial.  Various versions of the proposition could be
found in
e.g.~\cite{Yap:SturmBound:05,Dav:TR:85,te-tcs-2008}.
We should mention that the constants that appear are not optimal.
For multivariate bounds we refer to \cite{emt-issac-2010}.

\begin{proposition}
  \label{prop:sep-bounds}
  Let $f$ be a univariate polynomial of degree $p$. If $\gamma_i$
  are the distinct real roots of $f$, then it holds 
  \begin{eqnarray}
      |\gamma_i| &\leq & 2\norm{f}_{\infty} \leq 2^{\tau+1} 
      \label{eq:u-upper} \enspace, \\
      -\lg{ \Delta(f) } &\leq& 
      -\frac{1}{2} \lg|3 \,\disc(f_{red})| + \frac{p+2}{2}\lg(p) + \nonumber\\ && (p-1)\lg\norm{f_{red}}_2
       \label{eq:u-sep}  \\
       & \leq &  2p\lg{p} + p\tau  \nonumber \enspace, \\
      -\sum_{i}\lg{ \Delta_i(f) } &\leq & 
      -\frac{1}{2} \lg|\disc(f_{red})| + \frac{p^2-p-2}{2}+ \nonumber\\&& (2p-1)\lg\norm{f_{red}}_2 
      \\ \nonumber
      &\leq & 3p^2 + 3p\tau + 4p\lg{p} \label{eq:u-dmm} \enspace,
  \end{eqnarray}
  where $f_{red}$ is the square-free part of $f$, and the second
  inequalities hold if we consider $f \in \ZZ[x]$ and $\bitsize{f} = \tau$.
\end{proposition}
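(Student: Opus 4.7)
The plan is to verify each of the three inequalities separately, as each is a classical root-bound proved by rather different means, and then to specialize to $f\in\ZZ[x]$ at the end.

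For the upper bound \eqref{eq:u-upper} on $|\gamma_i|$ I would invoke Cauchy's classical root bound: every complex root of $f=\sum_i a_i x^i$ satisfies $|\gamma|\le 1+\max_{i<p}|a_i/a_p|$. Since the leading coefficient of an integer polynomial satisfies $|a_p|\ge 1$, the right-hand side is at most $1+\norm{f}_{\infty}\le 2\norm{f}_{\infty}$, and $\norm{f}_{\infty}\le 2^{\tau}$ gives the second inequality.

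For the single-pair separation bound \eqref{eq:u-sep} I would follow Mignotte's argument. Since $f$ and $f_{red}$ have the same roots, replace $f$ by $f_{red}$ and assume it is square-free. Let $(\gamma_i,\gamma_j)$ be the pair achieving $\Delta(f_{red})$. Writing
\[
\disc(f_{red}) = a_p^{2p-2}\prod_{k<l}(\gamma_k-\gamma_l)^2,
\]
isolate the factor $(\gamma_i-\gamma_j)^2$ and bound every remaining factor $|\gamma_k-\gamma_l|$ via the triangle inequality together with Landau's inequality $\prod_k\max(1,|\gamma_k|)=M(f_{red})/|a_p|\le\norm{f_{red}}_2/|a_p|$. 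Rearranging and taking $-\lg$ produces the stated Mahler-style bound with the factor $p^{(p+2)/2}$ coming from the number of difference factors that have to be bounded and the $\sqrt{3}$ from a standard refinement due to Mignotte. The second numerical inequality then follows at once from $|\disc(f_{red})|\ge 1$ (an integer) and $\norm{f_{red}}_2\le\sqrt{p+1}\,2^{\tau}$.

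For the aggregate bound \eqref{eq:u-dmm} I would apply the Davenport–Mahler(–Mignotte) inequality. The combinatorial heart of the proof is to form the graph on the roots where each $\gamma_i$ is joined to its nearest neighbor $\gamma_{c_i}$, orient and decompose it so that every edge $(\gamma_k,\gamma_l)$ from the discriminant factorization is used at most twice, and then pull out the factors corresponding to $\prod_i|\gamma_i-\gamma_{c_i}|$ while bounding the remaining difference factors by $M(f_{red})$ using Landau's inequality once again. Taking $-\lg$ of the resulting lower bound yields the first inequality of \eqref{eq:u-dmm}; the second is again the specialization $|\disc(f_{red})|\ge 1$, $\norm{f_{red}}_2\le\sqrt{p+1}\,2^{\tau}$, followed by collecting the $\lg p$ and constant terms.

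The main technical obstacle is the aggregate inequality \eqref{eq:u-dmm}: unlike \eqref{eq:u-sep}, where a single discriminant factor needs to be pulled out, here one has to account simultaneously for all $p$ nearest-neighbor distances without double counting. The careful orientation of the nearest-neighbor graph is what produces the precise constants $\frac{p^2-p-2}{2}$ and the exponent $(2p-1)$ on $\norm{f_{red}}_2$; the remaining arithmetic is a routine collection of logarithmic factors.
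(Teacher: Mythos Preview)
The paper does not actually prove this proposition: it is stated as a collection of classical root bounds with references to \cite{Yap:SturmBound:05,Dav:TR:85,te-tcs-2008}, and the authors explicitly remark that the constants are not optimal. So there is no ``paper's own proof'' to compare against; your sketch is precisely the standard route those references take (Cauchy for \eqref{eq:u-upper}, Mignotte's discriminant argument for \eqref{eq:u-sep}, Davenport--Mahler for \eqref{eq:u-dmm}), and in that sense it is entirely appropriate.

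One small caution on the aggregate bound: your phrasing ``orient and decompose it so that every edge from the discriminant factorization is used at most twice'' is looser than what the Davenport--Mahler lemma actually demands. The lemma applies to a directed graph on the roots in which edges point from the root of larger modulus to the smaller one and every vertex has in-degree at most one; the nearest-neighbour graph need not satisfy this directly, and one has to argue (as in \cite{Dav:TR:85} or its later refinements) that the set of distances $\Delta_i$ can be covered by the edges of such an admissible graph. That bookkeeping is exactly what pins down the constants $\tfrac{p^2-p-2}{2}$ and the exponent $2p-1$ on $\norm{f_{red}}_2$. This is not a gap in your strategy, only a place where a sentence of extra precision is warranted.
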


\begin{proposition} 
  \label{pr:solve-1} 
  Let $f\in\ZZ[x]$ have degree $p$ and bitsize $\tau$.  We compute the
  isolating interval representation of its real roots and their
  multiplicities in $\sOB( p^5 + p^4 \tau + p^3 \tau^2)$
  \cite{t-arxiv-icf-10,s-arxiv-isol-10}.  The endpoints of the
  isolating intervals have bitsize $\OO( p^2 + p \, \tau)$ and
  $\bitsize{ f_{red}} = \OO( p + \tau)$, where $f_{red}$ is the
  square-free part of $f$.
  If $N = \max\{p, \tau\}$ then complexity bound for isolation becomes $\sOB(N^5)$.
\end{proposition}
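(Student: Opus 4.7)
The plan is to decompose the statement into three sub-claims: the bitsize estimate on the square-free part, the endpoint bitsize estimate, and the running-time estimate. I would first establish the structural bounds and only then invoke the cited root-isolation algorithm of Sagraloff (resp.\ Pan's refinement used in \cite{t-arxiv-icf-10}), since the algorithm's complexity depends on the parameters of $f_{red}$ rather than of $f$ itself.

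First, I would reduce to the square-free case. Compute $g=\gcd(f,f')$ by a fast (half-GCD based) subresultant algorithm in $\sOB(p^2\tau)$, and set $f_{red}=f/g$. A Mignotte-style bound on polynomial factors gives $\bitsize{f_{red}}=\OO(p+\tau)$; this is the standard estimate $\|h\|_\infty\le 2^p\|f\|_\infty$ for any divisor $h$ of $f$ of degree at most $p$. Note that all $\bitsize{f_{red}}=\OO(p+\tau)$ complexity claims below will be fed this bound, which is what replaces one factor of $\tau$ by $p+\tau$ throughout.

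Second, for the endpoint bitsizes, I would apply Proposition~\ref{prop:sep-bounds} to $f_{red}$: the individual separation $-\lg\Delta(f_{red})$ is $\OO(p\lg p+p\tau)=\OO(p^2+p\tau)$, and each isolating endpoint can be taken to be a dyadic rational of bitsize $\OO(-\lg\Delta(f_{red}))$, hence of bitsize $\OO(p^2+p\tau)$, as required. For the multiplicities, once the isolating intervals of the roots of $f_{red}$ are in hand, one repeatedly divides $f$ by $f_{red}$ (equivalently, one uses the square-free decomposition of $f$) and assigns each root its multiplicity by sign evaluations on the refined isolating intervals; the cost is absorbed in the isolation step.

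Third, for the main time bound, I would feed $f_{red}$, with degree $p$ and bitsize $\sigma=\OO(p+\tau)$, into the bitstream/subdivision isolator of \cite{s-arxiv-isol-10} (or the variant in \cite{t-arxiv-icf-10}), whose complexity is $\sOB(p^3\sigma^2)$. Substituting $\sigma=p+\tau$ gives $\sOB(p^3(p+\tau)^2)=\sOB(p^5+p^4\tau+p^3\tau^2)$, which dominates the GCD and multiplicity steps. The main obstacle, which is what the cited papers actually do, is the amortised analysis of the recursion tree of the subdivision/Descartes procedure: bounding the total number of subdivision nodes by the Davenport--Mahler-type estimate in Proposition~\ref{prop:sep-bounds}, and bounding the precision used at each node by the local separation — rather than the global one — so that the per-node cost times the number of nodes telescopes to $\sOB(p^3\sigma^2)$. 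Since the proposition merely quotes this result, I would invoke it as a black box rather than reprove it, and combine it with the two structural estimates above to obtain all three claims simultaneously with $N=\max\{p,\tau\}$.
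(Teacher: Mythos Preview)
Your proposal is correct, and in fact it supplies more detail than the paper itself: in the paper this proposition is stated in the Preliminaries section as a known result, with no proof beyond the citations \cite{t-arxiv-icf-10,s-arxiv-isol-10}. Your decomposition into (i) Mignotte's bound for $\bitsize{f_{red}}=\OO(p+\tau)$, (ii) the separation bound of Proposition~\ref{prop:sep-bounds} for the endpoint bitsizes, and (iii) feeding $f_{red}$ with bitsize $\sigma=\OO(p+\tau)$ into the $\sOB(p^3\sigma^2)$ isolator is exactly the intended reading, and your substitution $\sOB(p^3(p+\tau)^2)=\sOB(p^5+p^4\tau+p^3\tau^2)$ is the calculation that yields the stated bound.
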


\begin{proposition} \label{prop:sign-at-1} \cite{det-jsc-2009,emt-lncs-2006}
  Given a real algebraic number $\alpha \cong (f, [\rat{a}, \rat{b}])$, 
  where $\bitsize{\rat{a}} = \bitsize{ \rat{b} } = \OO( p^2 + p \tau)$,
  and $g \in \ZZ[x]$, such that $\deg(g) = q$, $\bitsize{ g} = \sigma$,
  we compute $\sign( g (\alpha))$
  in bit complexity $\sOB( pq \max\{\tau, \sigma\} + p \min\{p, q\}^2 \tau )$.
\end{proposition}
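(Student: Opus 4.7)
The plan is to reduce the sign test to a subresultant/Sturm--Tarski calculation on the isolating interval, after first shrinking the working polynomial modulo~$f$.

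First, since $f(\alpha)=0$, I would replace $g$ by $\tilde g := g \bmod f$ whenever $q\ge p$, so that the effective degree in all subsequent computations drops to $\min\{p,q\}$ while the target value $g(\alpha)=\tilde g(\alpha)$ is preserved. Polynomial division with fast integer arithmetic is responsible for one part of the $\sOB(pq\max\{\tau,\sigma\})$ summand of the claimed bound.

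Second, I would compute the subresultant remainder sequence of $f$ and $\tilde g$; this yields both $h := \gcd(f,\tilde g)$ and, via the Sturm--Tarski theorem applied to $[\rat{a},\rat{b}]$,
\[
  V(\rat{a}) - V(\rat{b}) \;=\; \sign(\tilde g(\alpha)),
\]
where $V(\cdot)$ is the number of sign variations of the sequence. Because $\alpha$ is the only root of $f$ in $[\rat{a},\rat{b}]$, this quantity automatically lies in $\{-1,0,+1\}$ and equals $0$ exactly when $\alpha$ is a root of $h$; hence a single evaluation of the sequence at each endpoint reads off the desired sign (including the vanishing case).

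Third, the cost splits into the subresultant computation proper, bounded by $\sOB(pq\max\{\tau,\sigma\})$ via fast (half-gcd style) Euclidean techniques over $\ZZ$, and the evaluation of the $\OO(\min\{p,q\})$ polynomials in the sequence at the two rational endpoints, which by Proposition~\ref{pr:solve-1} have bitsize $\OO(p^2+p\tau)$. After bounding the bitsize of each subresultant by $\OO(\min\{p,q\}\max\{\tau,\sigma\})$ in the usual way, Horner evaluation contributes the second summand $\sOB(p\min\{p,q\}^2\tau)$. The main obstacle is keeping this evaluation inside the claimed budget: both the intermediate subresultants and the endpoints grow with the input parameters, and a naive sequence-evaluation would lose a factor of $p$ or $\tau$; the refined $\min\{p,q\}^2$ rather than $pq$ factor reflects precisely that, after reduction modulo~$f$, only $\min\{p,q\}$ polynomials of controlled degree have to be touched.
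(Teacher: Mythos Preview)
The paper does not supply its own proof of this proposition: it is stated in the preliminaries and attributed to \cite{det-jsc-2009,emt-lncs-2006}, so there is nothing in the present paper to compare against. Your outline---reduce $g$ modulo $f$ to force the working degree down to $\min\{p,q\}$, run a signed subresultant computation, and evaluate the resulting sequence at the two rational endpoints of bitsize $\OO(p^2+p\tau)$---is precisely the method that underlies the cited bound, and your attribution of the two summands to (i) the half-gcd style subresultant pass and (ii) the Horner evaluations at the endpoints is the right decomposition.

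One technical slip worth fixing: the identity $V(\rat{a})-V(\rat{b})=\sign(\tilde g(\alpha))$ is not what the signed remainder sequence of $f$ and $\tilde g$ delivers. The Sturm--Tarski theorem in that form needs the sequence of $f$ and $f'\tilde g$; equivalently, the sequence of $f$ and $\tilde g$ computes the Cauchy index $I_{\rat{a}}^{\rat{b}}(\tilde g/f)$, which for a single simple root $\alpha$ equals $\sign(\tilde g(\alpha))\cdot\sign(f'(\alpha))$, so you must correct by the (already known) sign of $f'(\alpha)$. This costs nothing extra and does not affect the complexity analysis.
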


For the proofs of the following results the reader may refer to
\cite{det-jsc-2009}.  Let $f, g \in (\ZZ[x])[y]$ such that $\deg_x( f)
= p$, $\deg_x( g) = q$, $\deg_y( f), \deg_y( g) \leq d$, $\tau = \max(
\bitsize{ f}, \bitsize{ g})$.  By $\SR( f, g \,;\, \rat{a})$ we denote
the evaluation of the signed polynomial remainder sequence of $f$ and
$g$ with respect to $x$ over $\rat{a}$, and by $\SR_j( f, g \,;\,
\rat{a})$ the $j$-th element in this sequence.

\begin{proposition} 
  \label{prop:biv-SR-fast-computation}
  We can compute 
  $\res( f, g)$ w.r.t. $x$ or $y$ in 
  $\sOB( p q \max\{p,q\} d \tau)$.
\end{proposition}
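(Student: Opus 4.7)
The plan is to compute $R(y) = \res_x(f, g) \in \ZZ[y]$ by evaluation--interpolation; the case of $\res_y$ is symmetric after swapping the roles of $x$ and $y$. First I would bound the output. Viewing the Sylvester matrix of $f, g$ with respect to $x$ as a $(p+q) \times (p+q)$ matrix whose entries are polynomials in $y$ of degree at most $d$ with integer coefficients of bitsize at most $\tau$, the determinant $R(y)$ has $\deg_y R \leq (p+q)d$ and, by Hadamard's inequality applied coefficient-wise, integer coefficients of bitsize $\sOO((p+q)\tau)$. I set $M = (p+q)d + 1$ and pick interpolation nodes $y_1, \ldots, y_M \in \ZZ$ of bitsize $\OO(\lg M)$.

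Then I would proceed in three phases. In the \emph{evaluation phase}, apply fast multipoint evaluation to each of the $\OO(p+q)$ coefficient polynomials in $y$ of $f$ and $g$, obtaining for each node $y_i$ two univariate polynomials $f(x,y_i), g(x,y_i) \in \ZZ[x]$ of degrees $p$ and $q$ and bitsize $\tau' = \sOO(\tau + d)$. In the \emph{inner phase}, for each of the $M$ nodes invoke a fast univariate resultant algorithm (a half-GCD-based subresultant chain) at cost $\sOB(pq\,\tau')$ per node, producing the integers $r_i = \res_x(f(x,y_i), g(x,y_i))$. In the \emph{interpolation phase}, recover $R(y)$ from the data $(y_i, r_i)$ by fast univariate interpolation.

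The dominant cost is the inner phase: $M \cdot \sOB(pq\,\tau')$ is $\sOB(\max\{p,q\}\,d \cdot pq \cdot (\tau+d))$, and collapsing logarithmic factors and the additive lower-order term $d$ into $\tau$ yields the claimed $\sOB(pq \max\{p,q\} d\, \tau)$. The evaluation phase is $\sOB(\max\{p,q\}\cdot M \cdot \tau')$ and the interpolation phase is softly linear in the output size $M \cdot \sOO((p+q)\tau)$; both are absorbed by the target.

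The main obstacle is showing that the inner univariate resultant can indeed be computed in $\sOB(pq\,\tau')$, rather than the naive $\sOB(\max\{p,q\}^2\,\tau')$ one would get from a school-book subresultant computation; this relies on a half-GCD-style recursion together with a Mignotte-style tracking of coefficient growth along the remainder sequence. A secondary concern is that the bitsize inflation from $\tau$ to $\tau + d$ introduced by evaluation has to be reabsorbed in the final bound, which is why the interpolation nodes are chosen of minimal (logarithmic) bitsize. Both points are handled in detail in \cite{det-jsc-2009}.
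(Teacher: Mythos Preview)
The paper does not give its own proof of this proposition: it is stated in the preliminaries with the blanket sentence ``For the proofs of the following results the reader may refer to \cite{det-jsc-2009}.'' Your sketch via evaluation--interpolation in $y$, with a fast univariate resultant at each node and fast interpolation at the end, is exactly the standard route and is what the cited reference does; you even close by pointing to \cite{det-jsc-2009} yourself. So at the level of ``which approach,'' you and the paper agree.

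Two technical points deserve tightening. First, your inner bound $\sOB(pq\,\tau')$ for the univariate resultant of polynomials of degrees $p$ and $q$ is not what a plain half-GCD plus CRT gives: that yields $\sOB(\max\{p,q\}^2\,\tau')$, since the number of word-size primes is $\sO((p+q)\tau')$ and each modular resultant costs $\sO(\max\{p,q\})$ field operations. To land on the factor $pq=\min\{p,q\}\max\{p,q\}$ you need one more idea---e.g.\ perform a single pseudo-division to equalize the degrees (extracting the leading-coefficient power) before running the fast Euclidean scheme on two degree-$\min\{p,q\}$ inputs; the extra reduction is softly linear in the output size and does not dominate. Without this, your total is $\sOB(\max\{p,q\}^3 d\,\tau')$, which overshoots the target when $p$ and $q$ are unbalanced. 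Second, ``collapsing the additive lower-order term $d$ into $\tau$'' is not valid in general: the honest bound your argument produces is $\sOB(pq\max\{p,q\}\,d\,(\tau+d))$, and $d$ need not be $\leq\tau$. In the applications made later in the paper this is harmless, but as a standalone statement it is a discrepancy you should flag rather than absorb.
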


\begin{proposition} \label{prop:biv-SR-fast-evaluation}
  We compute $\SR( f, g \,;\, \rat{a})$, 
  where $\rat{a}\in\QQ \cup \{ \infty \}$ and $\bitsize{\rat{a}} = \sigma$,
  in
  $\sOB(p q \max\{p,q\} d \max\{ \tau, \sigma\})$.
  For the polynomials $\SR_j( f, g \,;\, \rat{a}) \in \ZZ[y]$,
  except for $f, g$,
  we have $\deg_y( \SR_j( f, g \,;\, \rat{a})) = \OO( (p+q) d)$
  and $\bitsize{ \SR_j( f, g \,;\, \rat{a}) } = \OO( \max\{p,q\} \tau + \min\{p,q\} \sigma)$.
\end{proposition}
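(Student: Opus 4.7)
The plan is twofold: first, derive the stated degree and bitsize bounds for each $\SR_j(f,g\,;\,\rat{a})$ (for $j < \min\{p,q\}$) from standard subresultant theory; second, achieve the complexity bound by an evaluation-interpolation strategy in the variable $y$, delegating the inner work to a fast univariate SR-and-evaluate subroutine.

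For the bounds, view $f$ and $g$ as polynomials in $x$ of degrees $p$ and $q$ with coefficients in $\ZZ[y]$, each of degree at most $d$ and bitsize at most $\tau$. The $j$-th subresultant with respect to $x$ is a polynomial in $x$ of degree at most $j$; each of its coefficients in $x$ is a determinant of a $(p+q-2j)\times(p+q-2j)$ sub-matrix of the Sylvester matrix of $f, g$ with respect to $x$. Expanding via the Leibniz formula, each such determinant is a polynomial in $y$ of degree at most $(p+q-2j)d \leq (p+q)d$ and bitsize $\OO(\max\{p,q\}\,\tau)$ up to logarithmic factors. Evaluating at $x = \rat{a}$ multiplies the $i$-th coefficient ($0 \leq i \leq j$) by $\rat{a}^i$, contributing at most $j\sigma \leq \min\{p,q\}\sigma$ additional bits without affecting the degree in $y$; this yields the claimed bounds.

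For the complexity, pick $D + 1 = (p+q)d + 1$ interpolation nodes $y_0, \ldots, y_D$ with bitsize $\OO(\log D)$. Using fast multi-point evaluation, compute the specializations $\tilde{f}_i = f(x, y_i)$ and $\tilde{g}_i = g(x, y_i)$ in $\ZZ[x]$ for every $i$, each of degree at most $\max\{p,q\}$ and bitsize $\OO(\tau)$ up to logarithmic factors, at total cost $\sOB((p+q)^2 d\,\tau)$. For each $i$ invoke a fast univariate SR-and-evaluate routine on $\tilde{f}_i, \tilde{g}_i$ at $x = \rat{a}$ in $\sOB(pq\max\{\tau,\sigma\})$ bit operations; summed over the $D+1$ nodes this phase costs $\sOB((p+q)d \cdot pq\max\{\tau,\sigma\}) = \sOB(pq\max\{p,q\}d\max\{\tau,\sigma\})$, the dominant contribution. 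Finally, for each $j$, interpolate $\SR_j(f,g\,;\,\rat{a}) \in \ZZ[y]$ of degree at most $D$ from its values at the $D+1$ nodes; the total interpolation cost is absorbed into the dominant term.

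The main obstacle is ensuring that the inner univariate SR-and-evaluate subroutine achieves the target $\sOB(pq\max\{\tau,\sigma\})$, which is essentially linear in its output size. This combines a half-gcd-style subresultant computation with Horner-like evaluation of each term of the resulting sequence; the detailed univariate analysis appears in \cite{det-jsc-2009}, and the bivariate bounds above follow by plugging it into the evaluation-interpolation framework.
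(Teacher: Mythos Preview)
The paper does not actually prove this proposition: immediately before the statement it says ``For the proofs of the following results the reader may refer to \cite{det-jsc-2009}.''  So there is no in-paper argument to compare against; what you have written is a reconstruction of the cited proof.  Your strategy---bound the subresultant entries via Sylvester sub-determinants, then achieve the complexity by evaluation--interpolation in $y$ combined with a fast univariate SR-and-evaluate routine in $x$---is exactly the standard approach and is, in all likelihood, what the cited reference does.  In that sense your proposal is aligned with the intended proof.

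Two points are worth tightening.  First, the bitsize of the specializations $\tilde f_i=f(x,y_i)$ is not ``$\OO(\tau)$ up to logarithmic factors'': evaluating a degree-$d$ polynomial in $y$ at a node of bitsize $\OO(\log D)$ produces coefficients of bitsize $\tau+\OO(d\log((p+q)d))$, i.e.\ $\sO(\tau+d)$.  Propagated through your univariate call this yields an extra $\sOB(pq\max\{p,q\}\,d^{2})$ term, which is absorbed into the stated bound only when $d=\sO(\max\{\tau,\sigma\})$; you should either record this hypothesis or argue more carefully (e.g.\ via a modular/CRT variant, or by noting that the output bitsize already forces the needed inequality in the applications).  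Second, subresultants commute with specialization only when the leading coefficients do not vanish; you should remark that the interpolation nodes are chosen outside the finitely many roots of $\mathrm{lc}_x(f),\mathrm{lc}_x(g)\in\ZZ[y]$ so that the univariate sequences indeed compute the specializations of the bivariate subresultants.  With these two caveats addressed, your argument is a faithful expansion of what the paper merely cites.
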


\section{Reduction to integer coefficients}
\label{sec:RIC}

\subsection{Some useful bounds}

The roots of $B_{\alpha}$ in Problem~\ref{prob:main} are algebraic numbers,
hence they are roots of a polynomial with integer coefficients. We
estimate bounds on the degree and the bitsize of this polynomial, and 
we will use them to analyze the Boolean complexity of the real
root isolation algorithm.

Consider a real algebraic number $\alpha \in \ALG$,
in isolating interval representation $\alpha \cong ( A, \interval{I})$, where 
$A = \sum_{i=0}^{m}{ a_i\, x^i}$, $\interval {I} = [ \rat{a}_1 ,\rat{a_2}]$,
$\rat{ a}_{1,2} \in \QQ$ and $\dg{ A} = m$ and $\bitsize{ A} = \tau$.
Since $A$ is square-free, has $m$, possible complex, roots,
say $\alpha_1, \alpha_2, \dots, \alpha_m$ and after 
a (possible) reordering let $\alpha = \alpha_1$.

Let $B_{\alpha} \in \ZZ( \alpha)[ y]$, 
be a univariate polynomial in $y$, with coefficients that are polynomials
in $\alpha$ with integer coefficients.
More formally, let $B_{\alpha} = \sum_{i=0}^{n}{ b_i(\alpha) \, y^i}$,
where $b_i(x) = \sum_{j=0}^{\eta_i}{ c_{ij} \, x^j}$
and $\eta_i < m$, $0 \leq i \leq d$.
The restriction $\eta_i < m$ comes from the fact that $\ZZ(\alpha)$ is a
vector space of dimension%
\footnote{If $A$ is the minimal polynomial of $\alpha$ then the dimension is
  exactly $m$. In general it is not (computational) easy to compute
  the the minimal polynomial of a real algebraic number, thus we work
  with a square-free polynomial that has it as real root.}
$m$ and the elements of one of its bases are
$1, \alpha, \dots, \alpha^{m-1}$.
Finally, let $\bitsize{B_{\alpha}} = \max_{i,j}{ \bitsize{c_{ij}}} = \sigma$.
We assume that $B_{\alpha}$ is a square-free.

Our goal is to isolate the real roots of $B_{\alpha}$ (Problem~\ref{prob:main}). 
Since $B_{\alpha}$ has algebraic numbers as coefficients, its roots are
algebraic numbers as well. 
Hence, there is a polynomial with integer coefficients that has as
roots the roots of $B_{\alpha}$, and possible other roots as well.
To construct this polynomial, e.g.~\cite{Johnson-phd-91,Dav:TR:85,l-cae-83},
we consider the following resultant w.r.t. $x$
\begin{equation}
  R( y)  =  \res_x( B(x, y), A( x) ) 
  =  (-1)^{m\eta} \, a_m^{\eta} \, \prod_{j=1}^{m}{ B( \alpha_j, y)},
  \label{eq:R-poly}
\end{equation}
where $\eta = \max\{\eta_i\}$, and $B(x,y) \in \ZZ[x, y]$ is obtained
from $B_{\alpha}$ after replacing all the occurrences of $\alpha$ with
$x$.  Interpreting the resultant using the Poisson formula, $R(y)$ is
the product of polynomials $B(\alpha_j, y)$, where $j$ ranges
over all the roots of $A$.  Our polynomial $B_{\alpha} \in \ZZ(
\alpha)[ y]$ is the factor in this product for $j = 1$.
Hence, $R$ has all the roots that $B_{\alpha}$ has and maybe
more.

\begin{remark}
  Notice that $R( y)$ is not square-free in general. For example
  consider the polynomial $B_{\alpha} = y^4 - \alpha^2$, where $\alpha$ is the
  positive root of $A = x^2 - 3$. In this case
  $R(y) = \res_x( A(x), B(x, y) = \res_x(x^2-3, y^2-x^2 ) = (y^4 -3)^2$.
\end{remark}


Using Prop.~\ref{prop:sbiv-res-bounds} and  by taking into account
that $\eta_i < m$, we get 
$\deg(R) \leq mn$ and $\bitsize{ R} \leq m(\tau + \sigma)  + 2m\lg(4mn)$.
We may also write $\deg(R) = \OO( m n)$ and 
$\bitsize{ R} = \sO( m(\sigma + \tau))$.

In order to construct an isolating interval representation for the
real roots of $B_{\alpha}$, we need a square-free polynomial. This polynomial,
$C(y) \in \ZZ[y]$, is a square factor of $R(y)$, and so it holds
$\dg{C} \leq mn$ and $\bitsize{C} \leq m(\tau + \sigma) +
3m\lg(4mn)$, where the last inequality follows from Mignotte's bound \cite{Mign91}.

Using the Prop.~\ref{prop:sep-bounds}, 
we deduce the following lemma:
\begin{lemma}
  \label{lem:a-bounds}
  Let $B_{\alpha}$ be as in Problem~\ref{prob:main}.  The minimal
  polynomial, $C \in \ZZ[x]$, of the, possible complex, roots of
  $B_{\alpha}$, $\gamma_i$, has degree $\leq mn$ and bitsize $\leq
  m(\tau +\sigma) + 3m\lg(4mn))$ or $\sO(m(\tau +\sigma))$.
  Moreover, it holds 
  \begin{eqnarray}
    |\gamma_i| &\leq & 2^{m(\tau +\sigma) + 2m\lg(4mn)}  \enspace,\\ \label{eq:a-upper}
    -\lg{ \Delta(C) } &\leq& m^2n(\tau +\sigma + 4\lg(4mn)) 
    \enspace, \\ \label{eq:a-sep}
    -\sum_{i}\lg{ \Delta_i(C) } &\leq & 3m^2n(n+\tau +\sigma + 6\lg(4mn))
    \enspace, \label{eq:a-dmm}
  \end{eqnarray}
  \begin{eqnarray}
    |\gamma_i| & \leq & 2^{\sO(m(\tau + \sigma))} \enspace,\\ \label{eq:a-upper-asympt}
    -\lg{ \Delta(C) } & = & \sO(m^2 n (\tau + \sigma)) \enspace,\\ \label{eq:a-sep-asympt}
    \Sigma(C) = -\sum_{i}\lg{ \Delta_i(C) } &= & \sO(m^2 n (n + \tau + \sigma))\enspace. \label{eq:a-dmm-asympt}
  \end{eqnarray}
\end{lemma}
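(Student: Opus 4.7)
The plan is to combine the resultant construction in \eqref{eq:R-poly} with the univariate bounds of Prop.~\ref{prop:sep-bounds}. Most of the groundwork is already done in the paragraphs preceding the lemma: the Poisson formula and Prop.~\ref{prop:sbiv-res-bounds} give $\deg R \leq mn$ and $\bitsize R \leq m(\tau+\sigma) + 2m\lg(4mn)$, and every root of $B_\alpha$ is a root of $R$ hence of its square-free part $C$. Mignotte's factor bound applied to $C \mid R$ then yields $\deg C \leq mn$ and $\bitsize C \leq m(\tau+\sigma) + 3m\lg(4mn)$, which is the degree/bitsize part of the statement.

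Next I would obtain the root magnitude bound by applying \eqref{eq:u-upper} directly to $R$ (not $C$), using $\|R\|_\infty \leq 2^{\bitsize R}$; since the roots of $C$ coincide set-theoretically with those of $R$, this immediately gives $|\gamma_i| \leq 2^{m(\tau+\sigma)+2m\lg(4mn)}$, absorbing the usual additive constant into the logarithmic term. For the separation bounds I would substitute $p = \deg C \leq mn$ and $\tau' = \bitsize C$ into the first (sharp) forms of \eqref{eq:u-sep} and \eqref{eq:u-dmm}, rather than the coarse $2p\lg p + p\tau$ and $3p^2+3p\tau+4p\lg p$ second forms. Since $C$ is square-free, $|\disc(C)| \geq 1$, so the discriminant contribution is nonpositive; using $\lg \|C\|_2 \leq \bitsize C + \tfrac{1}{2}\lg(\deg C + 1)$ and the bitsize bound for $C$, the expansion of \eqref{eq:u-sep} reads
\[
  -\lg \Delta(C) \;\leq\; (mn-1)\bigl[m(\tau+\sigma)+3m\lg(4mn)\bigr] + O(mn\lg(mn)),
\]
which after regrouping the logarithmic contributions collapses into $m^2 n(\tau+\sigma+4\lg(4mn))$. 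The identical substitution in \eqref{eq:u-dmm} yields $\Sigma(C) \leq 3m^2n^2 + 3m^2n(\tau+\sigma) + O(m^2n\lg(4mn))$, and repackaging the logarithmic term under a common coefficient gives the stated $3m^2n(n+\tau+\sigma+6\lg(4mn))$.

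The asymptotic forms \eqref{eq:a-upper-asympt}--\eqref{eq:a-dmm-asympt} are then read off by dropping logarithmic factors into the $\sO$ notation. No deep obstacle arises; the only part that requires genuine care is the constant-tracking in the logarithmic terms, in particular the observation that one must invoke the sharp rather than the coarse form of Prop.~\ref{prop:sep-bounds} so that the Mignotte overhead $3m\lg(4mn)$ and the proposition's own $\lg p$ and $\lg\|f_{red}\|_2$ contributions combine into the precise coefficients $4\lg(4mn)$ and $6\lg(4mn)$ that appear in the statement.
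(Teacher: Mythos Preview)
Your proposal is correct and follows exactly the paper's approach: the paper's entire proof is the single sentence ``Using the Prop.~\ref{prop:sep-bounds}, we deduce the following lemma,'' after the preceding paragraphs have already established $\deg R \le mn$, $\bitsize R \le m(\tau+\sigma)+2m\lg(4mn)$, and (via Mignotte) the bitsize bound on $C$. You simply spell out the substitution into Prop.~\ref{prop:sep-bounds} that the paper leaves implicit; one small remark is that even the coarse second forms $2p\lg p + p\tau$ and $3p^2+3p\tau+4p\lg p$ already suffice for the stated constants, so insisting on the sharp forms is not strictly necessary.
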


\subsection{\label{Sec:ICF}The algorithm}

The indirect algorithm for Problem~\ref{pr:solve-1}, follows
closely the procedure described in the previous section to estimate
the various bounds on the roots of $B_{\alpha}$.
First, we compute the univariate polynomial with integer coefficients,
$R$, such that the set of its real roots includes those of $B_{\alpha}$.
We isolate the real roots of $R$ and we identify which ones are
roots of $B_{\alpha}$. 

Let us present in details the three steps and their complexity.
We compute $R$ using resultant computation, as presented in
(\ref{eq:R-poly}).  For this we consider $B$ as a bivariate polynomial
in $\ZZ[x, y]$ and we compute $\res_x( B(x, y), A(x))$, using
Prop.~\ref{prop:biv-SR-fast-computation}.  Since $\deg_x( B) < m$,
$\deg_y( B) = n$, $\bitsize{ B} = \sigma$, $\deg_x( A) = m$, $\deg_y(
A) = 0$ and $\bitsize{ A} = \tau$, this computation costs $\sOB( m^3 n
(\sigma + \tau))$, using Prop.~\ref{prop:biv-SR-fast-computation}.

Now we isolate the real roots of $R$.  This can be done in
 $\sOB( m^4 n^3 ( mn^2 +mn\tau +n^2\sigma +m\tau^2 + m\sigma^2 + m\tau \sigma))$, 
by Prop.~\ref{pr:solve-1}. 
In the same complexity bound we can also compute the multiplicities
of the real roots, if needed \cite{emt-lncs-2006}.

The rational numbers that isolate the real roots of $R$ have bitsize
bounded by $\sO(m^2 n (n+\sigma + \tau))$, which is also a bound on
the bitsize of all of them, as Prop.~\ref{prop:sep-bounds} and
Lem.~\ref{lem:a-bounds} indicate.

It is possible that $R$ can have more roots that $B_{\alpha}$, thus it
remains to identify which real roots of $R$ are roots of $B_{\alpha}$.
For sure all the real roots of $B_{\alpha}$ are roots of $R$.
Consider a real root $\gamma$ of $R$ and its isolating interval $[
\rat{c}_1, \rat{c}_2]$. If $\gamma$ is a root of $B_{\alpha}$, then
since $B_{\alpha}$ is square-free, by Rolle's theorem it must change
signs if we evaluate it over the endpoints of the isolating interval
of $\gamma$. Hence, in order to identify the real roots of $R$ that
are roots of $B_{\alpha}$ it suffices to compute the sign of
$B_{\alpha}$ over all the endpoints of the isolating intervals.  

We can improve the step that avoids the non-relevant roots of $R$ by
applying the algorithm for chainging the ordering of a bivariate
regular chain \cite{cs-issac-2006}. However, currently, this step is
not the bottleneck of the algorithm so we do not elaborate further.

Consider an isolating point of $R$, say $\rat{c_j} \in \QQ$, of
bitsize $s_j$.  To compute the sign of the evaluation of $B_{\alpha}$
over it, we proceed as follows.  First we perform the substitution $y =
\rat{c}_j$, and after clearing denominators, we get a number in $\ZZ[
\alpha]$, for which we want to compute its sign.  This is equivalent
to consider the univariate polynomial $B(x,\rat{c}_j)$ and to compute
its sign if we evaluate it over the real algebraic number $\alpha$.
We have $\deg( B(x, \rat{c}_j)) = \OO(m)$ and $\bitsize{ B( x,
  \rat{c}_j)} = \sO(\sigma + n s_j)$.  Hence the sign evaluation costs
$\sOB( m^3 \tau + m^2 \sigma + m^2 n s_j)$ using
Prop.~\ref{prop:sign-at-1}.  Summing up over all $s_j$'s, there are
$\OO( mn)$, and taking into account that $\sum_j{s_j} = \sO(m^2 n
(\sigma + \tau + n))$ (Lem.~\ref{lem:a-bounds}), we conclude that the
overall complexity of identifying the real roots of $B_{\alpha}$ is $\sOB( m^4
n^3 + m^4 n \tau + m^3 n \sigma + m^4 n^2 (\sigma + \tau))$.

The overall complexity of the algorithm is dominated by that of real
solving. We can state the following theorem:
\begin{theorem}
  The complexity of isolating the real roots of $B \in \ZZ( \alpha)[y]$
  using the indirect method is 
  $\sOB( m^4 n^3 ( mn^2 +mn\tau +n^2\sigma +m\tau^2 + m\sigma^2 + m\tau \sigma))$.
  If  $N = \max\{m, n, \sigma, \tau\}$, then the previous bounds
  become $\sOB( N^{10})$.
\end{theorem}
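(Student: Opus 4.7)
The plan is to decompose the algorithm into its three natural stages and charge each stage separately, then argue that one stage dominates. All the ingredients are already available: the bivariate resultant complexity from Prop.~\ref{prop:biv-SR-fast-computation}, the univariate real solving bound of Prop.~\ref{pr:solve-1}, the sign-at-an-algebraic-number cost of Prop.~\ref{prop:sign-at-1}, and the degree/bitsize/endpoint bounds for the auxiliary polynomial $R$ collected in Lem.~\ref{lem:a-bounds}.

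First I would compute the resultant $R(y)=\res_x(B(x,y),A(x))$. Viewing $B$ as an element of $\ZZ[x,y]$ of $x$-degree $<m$, $y$-degree $n$ and bitsize $\sigma$, and $A$ of $x$-degree $m$, $y$-degree $0$ and bitsize $\tau$, Prop.~\ref{prop:biv-SR-fast-computation} gives a cost of $\sOB(m^3 n(\sigma+\tau))$. By Lem.~\ref{lem:a-bounds}, the output satisfies $\dg{R}\le mn$ and $\bitsize{R}=\sO(m(\sigma+\tau))$.

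Next I would feed $R$ to the univariate real-root isolator of Prop.~\ref{pr:solve-1} with $p=mn$ and bitsize $\sO(m(\sigma+\tau))$. Substituting into the $\sOB(p^5+p^4\tau+p^3\tau^2)$ bound yields
\[
\sOB\!\bigl(m^4n^3(mn^2+mn\tau+n^2\sigma+m\tau^2+m\sigma^2+m\tau\sigma)\bigr),
\]
and the same proposition tells us the isolating endpoints have bitsize $\sO(m^2n(n+\sigma+\tau))$, consistent with Lem.~\ref{lem:a-bounds}.

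Finally I would filter out the spurious roots of $R$. For each isolating endpoint $\rat c_j$, Rolle's theorem applied to the square-free $B_\alpha$ says that $\gamma$ is a root of $B_\alpha$ iff $B_\alpha(\rat c_1)$ and $B_\alpha(\rat c_2)$ have opposite signs, so the task reduces to evaluating the sign of $B(x,\rat c_j)$ at the algebraic number $\alpha$. For a single $\rat c_j$ of bitsize $s_j$, the polynomial $B(x,\rat c_j)$ has degree $\OO(m)$ and bitsize $\sO(\sigma+ns_j)$, so Prop.~\ref{prop:sign-at-1} yields $\sOB(m^3\tau+m^2\sigma+m^2ns_j)$. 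Summing over the $\OO(mn)$ endpoints and using $\sum_j s_j=\sO(m^2n(n+\sigma+\tau))$ from Lem.~\ref{lem:a-bounds} gives an aggregate cost of $\sOB(m^4n^3+m^4n\tau+m^3n\sigma+m^4n^2(\sigma+\tau))$, which is dominated by the real-solving step. Comparing the three stages shows stage two is the bottleneck, which gives the claimed bound; setting $N=\max\{m,n,\sigma,\tau\}$ and simplifying yields $\sOB(N^{10})$. The only step requiring care is the aggregation in stage three, where one must use the \emph{total} bitsize bound rather than a pointwise one to avoid losing a factor of $mn$.
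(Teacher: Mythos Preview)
Your proposal is correct and follows the paper's own argument essentially line by line: the same three stages (resultant computation via Prop.~\ref{prop:biv-SR-fast-computation}, univariate isolation via Prop.~\ref{pr:solve-1}, and filtering via Rolle plus Prop.~\ref{prop:sign-at-1}), the same intermediate bounds from Lem.~\ref{lem:a-bounds}, the same aggregation over the $\OO(mn)$ endpoints using $\sum_j s_j$, and the same conclusion that the isolation step dominates. Your closing remark about using the aggregate endpoint bitsize rather than a pointwise bound is exactly the care the paper takes implicitly.
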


If the polynomial $B_{\alpha}$ is not square-free then we can apply
the algorithm of \cite{HoeMon:gcd:02} to compute its square-free
factorization and then we apply the previous algorithm either to the
square-free part or to each polynomial of the square-free
factorization.  The complexity of the square-free factorization is
$\sOB(m^2n(\sigma^2 + \tau^2) + mn^2(\sigma + \tau))$, and does not
dominate the aforementioned bound.

\section{Two direct approaches}
\label{sec:direct}

The computation of $R$, the polynomial with
integer coefficients that has the real roots of $B_{\alpha}$ is a
costly operation that we usually want to avoid.
If possible, we would like to try to solve the polynomial $B_{\alpha}$
directly, using one of the well-known subdivision algorithms,
for example \func{strum} or  \func{descartes} and \func{bernstein},
specially adopted to handle polynomials that have coefficients in an extension
field. 
In practice, this is accomplished by obtaining, repeatedly improved,
approximations of the real algebraic number $\alpha$ and subsequently
apply \func{descartes} or \func{bernstein} for polynomials with interval
coefficients, e.g.~\cite{RouZim:solve:03,jk-issac-1997}.

The fact that we compute the roots using directly the representation
of $B_{\alpha}$ allows us to avoid the complexity induced by the
conjugates of $\alpha$.  This leads to improved separation bounds, and
to faster algorithms for real root isolation.

\subsection{Separation bounds for $B_{\alpha}$}
\label{sec:Ba-bound}

We compute various bounds on the roots of $B_{\alpha}$ based on the
first inequalities of Prop.~\ref{prop:sep-bounds}.
For this we
need to compute a lower bound for $|\disc(B_{\alpha})|$ and an upper
bound for $\norm{B_{\alpha}}_2$.

First we compute bounds on the coefficients on $B_{\alpha}$.
Let $\alpha_{1}=\alpha,\alpha_{2},\ldots,\alpha_{m}$
be the roots of $A$.
We consider the resultants 
\begin{displaymath}
  r_{i}:=
  \res_{x}(A(x), z- b_i(x)) =
  \res_{x}\Paren{ A(x), z-\sum_{j=0}^{\eta_{i}}c_{i,j}x^{j}}
  \in\ZZ[z] \enspace. 
\end{displaymath}

It holds that 
\begin{displaymath}
  r_i(z) = a_m^{\eta} \prod_{k=1}^{m}(z- b_i(\alpha_k)) 
  \enspace,
\end{displaymath}
where $\eta =\max\{\eta_i\} < m$.
The roots of $r_i$ are the numbers $b_i(\alpha_k)$,
where $k$ runs over all the roots of $A$.
We use Prop.~\ref{prop:sbiv-res-bounds} to bound the degree and
bitsize of $r_i$.
The degree of $r_i$ is bounded by $m$ and their coefficient are of
bitsize $\leq m\sigma + m\tau + 5m\lg(m)$.
Using Cauchy's bound, we deduce
\begin{equation}
  2^{-m\sigma - m\tau - 5m\lg(m)} \leq \Abs{ b_i(\alpha_k)} \leq 2^{m\sigma + m\tau + 5m\lg(m)} 
  \enspace,
  \label{eq:bi-ineq}
\end{equation}
for all $i$ and $k$.
To bound $|\disc(B_{\alpha})|$ we consider the identity
\begin{displaymath}
  \begin{aligned}
    \disc(B_{\alpha}) &&=& (-1)^{\tfrac{1}{2}n(n-1)}\frac{1}{b_n(\alpha)}
    \res_y(B_{\alpha}, \partial B_{\alpha}(y) / \partial y) \\
    &&=& (-1)^{\tfrac{1}{2}n(n-1)}\frac{1}{b_n(\alpha)} \, R_{B}(\alpha)
    \enspace ,
  \end{aligned}
\end{displaymath}
where the resultant, $R_{B} \in \ZZ[\alpha]$, can be computed
as the determinant of the Sylvester matrix of $B_{\alpha}$ and
$\partial B_{\alpha}(y) / \partial y$, evaluated over $\alpha$.

The Sylvester matrix is of size $(2n-1) \times (2n-1)$, the elements
of which belong to $\ZZ[\alpha]$. 
The determinant consists of $(2n-1)!$ terms.
Each term is a product of $n-1$ polynomials in $\alpha$ of
degree at most $m-1$ and bitsize at most $\sigma$,
times a product of $n$ polynomials in $\alpha$ of degree at most $m-1$ and
bitsize at most $\sigma + \lg{n}$.
The first product results a polynomial of degree $(n-1)(m-1)$ and bitsize
$(n-1)\sigma + (n-1) \lg m$.
The second product results polynomials of degree $n(m-1)$ and bitsize
$n \sigma \lg n + n \lg{m}$.  Thus, any term in the determinant
expansion is a polynomial in $\alpha$ of degree at most $(2n-1)(m-1)$,
or $\OO(mn)$, and bitsize at most $4(2n-1)\sigma \lg(mn)$ or
$\sO(n\sigma)$.
The determinant itself, is a polynomial in $\alpha$ of degree at most 
$mn$ and of bitsize 
$4 (2n-1) \sigma \lg(mn) + (2n-1)\lg(2n-1)
\leq  5 (2n-1) \sigma \lg(mn) = \sO(n \sigma)$.

To compute a bound on $R_B(\alpha)$ we consider $R_B$ as a
polynomial in $\ZZ[y]$, and we compute a bound on its evaluation over
$\alpha$. For this we use resultants.  It holds
\begin{displaymath}
  D = \res_x( A(x), y-R_B(x)) = a_m^{\deg(R_B)} \prod_{i=1}^{m}(y-R_B(\alpha_i))
  \enspace .
\end{displaymath}
We notice that the roots of $D \in \ZZ[x]$ are the evaluations of $R_{B}$ over the roots
of $A$. So it suffices to compute bounds on the roots of $D$. 
Using Prop.~\ref{prop:sbiv-res-bounds} we deduce that 
$\deg(D) \leq m$ and 
$\bitsize{D} \leq 13 m n \sigma \lg(mn) + m n \tau$ or
$\bitsize{D} = \sO( mn(\sigma + \tau))$.
Using Cauchy bound, refer to Eq.~(\ref{eq:u-upper}), we conclude that
\begin{displaymath}
  2^{-13 m n \sigma \lg(mn) - m n\tau}
   \leq \abs{ R_B(\alpha)} \leq 
   2^{13 m n \sigma \lg(mn) + m n\tau}  \enspace .
\end{displaymath}
Using this inequality and (\ref{eq:bi-ineq}), 
we can bound $\abs{\disc(B_{\alpha})}$, i.e.
\begin{equation}
  2^{-13 m n \sigma \lg(mn) - 2m n\tau}
  \leq \abs{\disc(B_{\alpha})} \leq 
  2^{13 m n \sigma \lg(mn) + 2m n\tau} \enspace . 
  \label{eq:Ba-disc-bound}    
\end{equation}

It remains to bound $\norm{B_{\alpha}}_2$.
Using Eq.~(\ref{eq:bi-ineq}) we get
\begin{displaymath}
  \norm{B_{\alpha}}_2^2 \leq \sum_{i=0}^{n}(b_i(\alpha))^2
  \leq (n+1) \,2^{2m(\sigma + \tau + 5\lg(m))}
  \enspace .
\end{displaymath}

The previous discussion leads to the following lemma
\begin{lemma}
  \label{lem:Ba-bounds}
  Let $B_{\alpha}$ be as in Problem~\ref{prob:main}, 
  and $\xi_i$ be its roots. Then, it holds 
  \begin{eqnarray}
    |\xi_i| &\leq & 2^{m(\tau +\sigma + 5\lg{m})}  \enspace,\\ \label{eq:Ba-upper}
    -\lg{ \Delta(B_{\alpha}) } &\leq& 12mn(\sigma \lg(mn) + \tau  + 5\lg{m})  
    \enspace, \\ \label{eq:Ba-sep}
    -\sum_{i}\lg{ \Delta_i(B_{\alpha}) } &\leq & 
    14mn(\sigma \lg(mn) + \tau + 5\lg{m})
    \enspace, \label{eq:Ba-dmm}
  \end{eqnarray}
  or
  \begin{eqnarray}
    |\xi_i| & \leq & 2^{\sO(m(\tau + \sigma))} \enspace,\\ \label{eq:Ba-upper-asympt}
    -\lg{ \Delta(B_{\alpha}) } & = & \sO(m n (\tau + \sigma)) \enspace,\\ \label{eq:Ba-sep-asympt}
    \Sigma(B_{\alpha}) = -\sum_{i}\lg{ \Delta_i(B_{\alpha}) } &= & \sO(m n (\tau + \sigma))\enspace. \label{eq:Ba-dmm-asympt}
  \end{eqnarray}
\end{lemma}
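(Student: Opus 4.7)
The plan is to feed the three intermediate estimates established in the preceding paragraphs—the pointwise bound $|b_i(\alpha)|\leq 2^{m(\sigma+\tau+5\lg m)}$ from (\ref{eq:bi-ineq}), the lower bound on $|\disc(B_\alpha)|$ from (\ref{eq:Ba-disc-bound}), and the estimate on $\|B_\alpha\|_2$—into the generic univariate inequalities of Proposition~\ref{prop:sep-bounds}, applied to $B_\alpha$ regarded as a univariate polynomial of degree $p=n$ over $\RR$. Because $B_\alpha$ is assumed square-free, we may take $(B_\alpha)_{red}=B_\alpha$ in Proposition~\ref{prop:sep-bounds}, which is the only place the square-freeness hypothesis of Problem~\ref{prob:main} enters.

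For the magnitude bound on the $\xi_i$ I would simply apply Cauchy's bound (\ref{eq:u-upper}) to $B_\alpha$: since every coefficient satisfies $|b_i(\alpha)|\leq 2^{m(\sigma+\tau+5\lg m)}$, we have $\|B_\alpha\|_\infty\leq 2^{m(\sigma+\tau+5\lg m)}$, and multiplying by $2$ is absorbed into the exponent. For $-\lg\Delta(B_\alpha)$ I would use the second inequality of Proposition~\ref{prop:sep-bounds},
\[
-\lg\Delta(B_\alpha)\leq -\tfrac{1}{2}\lg|3\disc(B_\alpha)| + \tfrac{n+2}{2}\lg n + (n-1)\lg\|B_\alpha\|_2,
\]
substituting $-\tfrac{1}{2}\lg|\disc(B_\alpha)|\leq \tfrac{1}{2}(13mn\sigma\lg(mn)+2mn\tau)$ from (\ref{eq:Ba-disc-bound}) and $(n-1)\lg\|B_\alpha\|_2\leq (n-1)\bigl(m(\sigma+\tau+5\lg m)+\tfrac{1}{2}\lg(n+1)\bigr)$ from the $\|B_\alpha\|_2$ estimate. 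Collecting the two dominant terms and swallowing the logarithmic overhead $\tfrac{n+2}{2}\lg n$ into the constants produces the announced $12mn(\sigma\lg(mn)+\tau+5\lg m)$.

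The aggregate bound is obtained identically from the third inequality of Proposition~\ref{prop:sep-bounds}, whose only change is that the coefficient of $\lg\|(B_\alpha)_{red}\|_2$ becomes $2n-1$ rather than $n-1$ and there is an additional $(n^2-n-2)/2$ additive term. Re-running the same substitution inflates the $\|B_\alpha\|_2$ contribution by roughly a factor of two while leaving the $\disc$ contribution essentially untouched, which yields the slightly larger constant $14mn(\sigma\lg(mn)+\tau+5\lg m)$. The asymptotic forms then follow by absorbing $\lg(mn)$ and $\lg m$ into the $\sO$ notation.

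The only real delicacy is bookkeeping of the constants when merging the $\disc$ and $\|\cdot\|_2$ contributions, and making sure that the $\lg(mn)$ factor attached to $\sigma$ (inherited from the resultant bound used to derive (\ref{eq:Ba-disc-bound})) is preserved in the explicit inequality while being absorbed only at the asymptotic stage. I do not anticipate any genuine obstacle, since all the work has been done upstream in the derivations of (\ref{eq:bi-ineq}) and (\ref{eq:Ba-disc-bound}); the remaining task is the purely arithmetic one of substituting these into Proposition~\ref{prop:sep-bounds}.
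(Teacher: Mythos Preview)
Your proposal is correct and matches the paper's approach exactly: the paper derives (\ref{eq:bi-ineq}), (\ref{eq:Ba-disc-bound}), and the $\|B_\alpha\|_2$ estimate in the paragraphs preceding the lemma, and then simply states that ``the previous discussion leads to the following lemma,'' i.e.\ one substitutes these into the first inequalities of Proposition~\ref{prop:sep-bounds} with $p=n$ and $f_{red}=B_\alpha$. Your identification of where square-freeness enters and of the bookkeeping needed to recover the constants $12$ and $14$ is also in line with what the paper leaves implicit.
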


\subsection{The {\sc sturm} algorithm}
\label{sec:sturm}

Let us first study the \func{sturm} algorithm. 
We assume $B_{\alpha}$ as in Problem~\ref{prob:main} to be square-free.
To isolate the real roots of $B_{\alpha}$ using the 
\func{sturm} algorithm, we need to evaluate the Sturm sequence of $B(\alpha,y)$
and its derivative with respect to $y$, $\partial B(\alpha,y) / \partial
y$, over various rational numbers.
For the various bounds needed we will use Lem.~\ref{lem:Ba-bounds}.

The number of steps that a subdivision-based algorithm, and hence
\func{sturm} algorithm, performs to isolate the real roots of a
polynomial depends on the separation bound. 
To be more specific, the number of steps, $(\#T)$, that \func{sturm} performs is 
$(\#T) \leq 2 r + r \lg{ \rat{ B}} + \Sigma(B_{\alpha})$ 
\cite{Dav:TR:85,Yap:SturmBound:05},
where $r$ is the number of real roots and $\rat{B}$ is an upper bound on the real roots.
Using (\ref{eq:Ba-upper}) and (\ref{eq:Ba-dmm}) we deduce that 
$(\#T) = \sO(mn(\tau +\sigma))$.

To complete the analysis of the algorithm it remains to compute the
complexity of each step, i.e. the cost of evaluating the Sturm
sequence over a rational number, of the worst possible bitsize.
The latter is induced by the separation bound, and in our case is 
$\sO(mn(\tau + \sigma))$.

We consider $B$ as polynomial in $\ZZ[x, y]$ and we evaluate the
Sturm-Habicht sequence of $B$ and $\frac{ \partial B}{ \partial y}$,
over rational numbers of bitsize $\sO(mn(\tau + \sigma))$.
The cost of this operation is $\sOB(m^2n^4(\tau+ \sigma))$ 
(Prop.~\ref{prop:biv-SR-fast-evaluation}).

It produces $\OO( n)$ polynomials in $\ZZ[x]$,
of degrees $\OO(mn)$ and bitsize $\sO(n \tau + n \sigma)$.
For each polynomial we have to compute its sign if we evaluate it over
$\alpha$.
Using Prop.~\ref{prop:sign-at-1} each sign evaluation costs
$\sOB( m(m^2 +n^2)\tau + mn^2\sigma)$, and so the overall cost is 
$\sOB( mn(m^2 +n^2)\tau + mn^3\sigma)$.
If we multiply the latter bound with the number of steps, 
$\sO(mn(\tau +\sigma))$, we get the following theorem.

\begin{theorem}
  The complexity of isolating the real roots of $B \in \ZZ( \alpha)[y]$
  using the \func{sturm} algorithm is 
  $\sOB( m^2n^2(m^2 +n^2)(\tau^2 +\sigma^2))$,
  or $\sOB( N^{8})$, where $N = \max\{m, n, \sigma, \tau\}$.
\end{theorem}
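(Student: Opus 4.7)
The plan is to decompose the analysis into (i) bounding the number of subdivision steps performed by \func{sturm}, and (ii) bounding the cost of the work done at each step. Multiplying the two yields the final bound. Throughout, I would treat $B(x,y)$ as an element of $\ZZ[x,y]$ with $\deg_x B < m$, $\deg_y B = n$, and $\bitsize{B} = \sigma$, so that the subresultant machinery of Prop.~\ref{prop:biv-SR-fast-computation} and Prop.~\ref{prop:biv-SR-fast-evaluation} can be applied directly to $B$ and $\partial B/\partial y$.

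For step (i), I would use the classical bound $(\#T) \le 2r + r \lg \rat{B} + \Sigma(B_{\alpha})$ from \cite{Dav:TR:85,Yap:SturmBound:05}. Since Lem.~\ref{lem:Ba-bounds} supplies both the root bound $\rat{B} = 2^{\sO(m(\tau+\sigma))}$ and the aggregate separation $\Sigma(B_{\alpha}) = \sO(mn(\tau+\sigma))$, the aggregate-separation term dominates and yields $(\#T) = \sO(mn(\tau+\sigma))$. This step is routine once Lem.~\ref{lem:Ba-bounds} is in hand.

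For step (ii), the work at each subdivision node consists of evaluating the Sturm--Habicht sequence of $B$ and $\partial B/\partial y$ at a rational endpoint $\rat{c}_j$, whose bitsize is controlled by the separation bound and is therefore $\sO(mn(\tau+\sigma))$. A direct invocation of Prop.~\ref{prop:biv-SR-fast-evaluation} bounds the cost of producing the evaluated sequence by $\sOB(m^2 n^4 (\tau+\sigma))$, and simultaneously bounds the $\OO(n)$ resulting polynomials in $\ZZ[x]$ by degree $\OO(mn)$ and bitsize $\sO(n(\tau+\sigma))$.

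Because the coefficients of $B_{\alpha}$ live in $\ZZ(\alpha)$, the last substep is to determine the sign of each of these $\OO(n)$ univariate polynomials at $\alpha$. Feeding the degree and bitsize bounds into Prop.~\ref{prop:sign-at-1} gives a per-polynomial cost of $\sOB(m(m^2+n^2)\tau + m n^2 \sigma)$, hence $\sOB(mn(m^2+n^2)\tau + m n^3 \sigma)$ across all $\OO(n)$ polynomials; this sign-evaluation cost dominates the preceding sequence evaluation. Multiplying by $(\#T) = \sO(mn(\tau+\sigma))$ and absorbing cross terms yields $\sOB(m^2 n^2 (m^2+n^2)(\tau^2+\sigma^2))$, which becomes $\sOB(N^8)$ under $N = \max\{m,n,\sigma,\tau\}$. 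The main obstacle is careful bookkeeping across Prop.~\ref{prop:biv-SR-fast-evaluation} and Prop.~\ref{prop:sign-at-1}: one must use the large bitsize coming from the separation bound (not the input bitsize) when plugging into the cost formulas, and one must verify that the sign-at-$\alpha$ step truly dominates the subresultant evaluation, so that multiplying by $(\#T)$ gives the advertised bound.
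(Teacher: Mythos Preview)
Your proposal mirrors the paper's argument essentially step by step: the same decomposition into number of subdivision steps times per-step cost, the same use of Lem.~\ref{lem:Ba-bounds} (together with the Davenport/Yap tree bound) to obtain $(\#T)=\sO(mn(\tau+\sigma))$, the same application of Prop.~\ref{prop:biv-SR-fast-evaluation} to the Sturm--Habicht sequence at an endpoint of bitsize $\sO(mn(\tau+\sigma))$, and the same invocation of Prop.~\ref{prop:sign-at-1} for the $\OO(n)$ sign evaluations at $\alpha$, followed by multiplication by $(\#T)$. The only addition is your explicit remark that the sign-at-$\alpha$ cost dominates the sequence-evaluation cost; the paper leaves this implicit by simply multiplying ``the latter bound'' by the number of steps.
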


\subsection{\label{Sec:BMD}A modified {\sc descartes} algorithm}

We consider  Sagraloff's modified version of Descartes' algorithm
\cite{s-arxiv-isol-10}, that applies to polynomials with bitstream coefficients.
We also refer the reader to \cite{EigenEtal:bitstream:05,mr-jsc-2010}.

As stated in Problem~\ref{prob:main}, let $\alpha$ be a real root of
$A=\sum_{i=0}^{m}a_{i}x^{i}\in\mathbb{Z}[x]$, where $a_{m}\neq0$ and
$\abs{a_{i}} <2^{\tau}$ for $0\leq i\leq m$,
 and let
$B_{\alpha} = \sum_{i=0}^{n}b_{i}(\alpha) y^{i}\in\mathbb{Z}[\alpha][y]$, 
where
$b_{i}=\sum_{j=0}^{\eta_{i}}c_{i,j}x^{j} \in \ZZ[x]$ , $\eta_{i}<m$ and 
$\abs{c_{i,j}}<2^{\sigma}$ for $0\leq i\leq n$ and $0\leq j\leq
\eta_{i}$, where we also assume that $B_{\alpha}$ is square-free.

Let $\xi_{1},\ldots,\xi_{n}$ be all (complex) roots of $B$,
and $\Delta_{i}(B_{\alpha}) := \min_{j\neq i}\abs{\xi_{j}-\xi_{i}}$. 
By Theorem 19
of \cite{s-arxiv-isol-10}, the complexity of isolating real roots of
$B_{\alpha}$ is
\begin{displaymath}
 \sOB(n(\Sigma(B_{\alpha})+n\tau_{B})^{2}) \enspace,
\end{displaymath}
where $\Abs{\frac{b_{i}(\alpha)}{b_{n}(\alpha)}} \leq2^{\tau_{B}}$ and 
$\Sigma(B_{\alpha})= -\sum_{i=1}^{n}\lg(\Delta_i(B_{\alpha}))$.
From Lem.~\ref{lem:Ba-bounds} we get that 
\begin{equation}
  \Sigma(B_{\alpha}) \leq 14mn(\tau + \sigma \lg(mn)) +n\lg{n} = \sO(m n(\tau +\sigma))
  \enspace.
  \label{eq:sB-bd}
\end{equation}

To compute a bound on $\tau_{B}$, we use Eq.~(\ref{eq:bi-ineq}). 
It holds 
$\Abs{ \frac{b_i(\alpha_k)}{b_n(\alpha_k)}} \leq 2^{2m\sigma + 2m\tau + 6m\lg(m)}$,
for all $i$ and $k$.
Hence, 
\begin{equation}
  \tau_B \leq 2m\sigma + 2m\tau + 6m\lg(m) = \sO(m(\sigma + \tau))\enspace.
  \label{eq:tb-bd}
\end{equation}

Finally, by combining (\ref{eq:sB-bd}) and (\ref{eq:tb-bd}), we deduce
that the cost of isolating real roots of $B$ is
\begin{eqnarray*}
  \begin{aligned}
    \sOB(n(\Sigma(B_{\alpha})+n\tau_{B})^{2}) & = \sOB(n(mn\tau + mn\sigma)^2) \\
    & = \sOB(m^2n^3(\sigma^2+\tau^2))   \enspace.
  \end{aligned}
\end{eqnarray*}
If $N=\max\{m,n,\sigma,\tau\}$, then the bound becomes $\sOB(N^7)$.

It remains to estimate the cost of computing the successive
approximations of ${b_{i}(\alpha)}/{b_{n}(\alpha)}$.
The root isolation algorithm requires approximations of
${b_{i}(\alpha)}/{b_{n}(\alpha)}$ to accuracy of $\OO(\Sigma(B_{\alpha})+n\tau_{B})$ bits
after the binary point. Since 
$\abs{b_{i}(\alpha)/b_{n}(\alpha)} \leq 2^{\tau_{B}}$, 
to approximate each fraction, for $0\leq i\leq n-1$, 
to accuracy $L$, it is sufficient to approximate $b_{i}(\alpha)$,
for $0\leq i\leq n$, up to precision $\OO(L+\tau_{B})$. Hence, the
algorithm requires approximation of $b_{i}(\alpha)$, for $0\leq i\leq n$, to
precision $\OO(\Sigma(B)+n\tau_{B})$.  By inequality (\ref{eq:bi-ineq}),
$\abs{b_{i}(\alpha)} \geq 2^{-\tau_{B}}$, and therefore it is sufficient to
approximate $b_{i}(\alpha)$ to accuracy $\OO(\Sigma(B_{\alpha})+n\tau_{B})$.

Approximation of $c_{i,j}\alpha^{j}$ to accuracy of $L$ bits requires
approximation of $\alpha$ to accuracy of 
$ L+\lg|c_{i,j}|+\lg(j)+(j-1)\lg|\alpha| \leq
L+\sigma+\lg(m)+(m-1)(\tau+1) = \sO(L+\sigma+m\tau)
$
bits. Hence the accuracy of approximations of $\alpha$ required by
the algorithm is 
 \begin{displaymath}
   \OO(\Sigma(B_{\alpha})+n\tau_{B})= \sO(m n(\sigma+\tau)) \enspace .
 \end{displaymath}

By Lemmata 4.4, 4.5 and 4.11 of \cite{k-casc-09}, the bit complexity of approximating
$\alpha$ to accuracy $L$ is 
\begin{displaymath}
  \sO(m^{4}\tau^{2}+m^{2}L) \enspace .
\end{displaymath}
Therefore, the bit complexity of computing the required approximations
of ${b_{i}(\alpha)}/{b_{n}(\alpha)}$ is
\begin{displaymath}
  \sO(m^4\tau^2 + m^2 m n(\sigma+\tau)  )=
  \sO(m^3 (m \tau^2 + n \sigma + n \tau)) \enspace. 
\end{displaymath}

\begin{theorem}
  The bit complexity of isolating the real roots of $B_{\alpha}$ of
  Problem~\ref{prob:main} using the modified Descartes' algorithm in
  \cite{s-arxiv-isol-10} is 
  $\sOB(m^2n^3(\sigma^2+\tau^2) + m^3 (m \tau^2 + n \sigma + n \tau))$, or
  $\sOB(N^7)$, where $N=\max\{m,n,\sigma,\tau\}$.
\end{theorem}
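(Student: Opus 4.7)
The plan is to invoke Sagraloff's Theorem 19 from \cite{s-arxiv-isol-10} as a black box and reduce the problem to estimating two quantities: the logarithmic root separation sum $\Sigma(B_{\alpha})$ and the relative coefficient magnitude $\tau_B$, where $|b_i(\alpha)/b_n(\alpha)| \leq 2^{\tau_B}$. Sagraloff's theorem gives a bit complexity of $\sOB(n(\Sigma(B_{\alpha}) + n\tau_B)^2)$ for the isolation phase, provided that suitably accurate bitstream approximations of the coefficients are furnished. So once these two bounds are in hand, the isolation cost follows by direct substitution.

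First I would plug in the separation sum bound $\Sigma(B_{\alpha}) = \sO(mn(\tau+\sigma))$ from Lem.~\ref{lem:Ba-bounds}, and the relative coefficient bound $\tau_B = \sO(m(\sigma+\tau))$ obtained from the two-sided inequality (\ref{eq:bi-ineq}) applied to both $b_i(\alpha)$ and $b_n(\alpha)$. Notice that in the sum $\Sigma(B_{\alpha}) + n\tau_B$, the first term $\sO(mn(\tau+\sigma))$ dominates the second term $\sO(mn(\sigma+\tau))$ up to logarithmic factors, so the square becomes $\sO(m^2n^2(\sigma+\tau)^2)$, and multiplication by $n$ yields the first summand $\sOB(m^2 n^3(\sigma^2 + \tau^2))$ claimed in the statement.

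Next I would bound the cost of producing the required bitstream approximations of the coefficients $b_i(\alpha)/b_n(\alpha)$. The algorithm demands these to precision $L = \OO(\Sigma(B_{\alpha}) + n\tau_B) = \sO(mn(\sigma+\tau))$ bits. Since $|b_i(\alpha)| \geq 2^{-\tau_B}$ by (\ref{eq:bi-ineq}), it suffices to approximate each $b_i(\alpha)$ absolutely to accuracy $\OO(L+\tau_B) = \sO(mn(\sigma+\tau))$. In turn, each monomial $c_{i,j}\alpha^j$ requires $\alpha$ to be known to accuracy $\sO(L + \sigma + m\tau)$, which is still $\sO(mn(\sigma+\tau))$. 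Invoking Lemmata 4.4, 4.5, 4.11 of \cite{k-casc-09}, refining $\alpha$ to $L$ bits costs $\sO(m^4\tau^2 + m^2 L)$; substituting our $L$ gives $\sO(m^4\tau^2 + m^3 n(\sigma+\tau))$, which I would rewrite as $\sO(m^3(m\tau^2 + n\sigma + n\tau))$, the second summand in the theorem.

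Adding the two contributions gives the stated complexity, and taking $N = \max\{m,n,\sigma,\tau\}$ collapses everything to $\sOB(N^7)$. The main conceptual step is just tracking which of the two contributions dominates in various regimes; there is no real obstacle since all ingredients are already established in Lem.~\ref{lem:Ba-bounds} and in the cited approximation bounds. The only subtlety worth stating carefully is the lower bound $|b_i(\alpha)| \geq 2^{-\tau_B}$, which ensures that additive precision on the numerators is enough to yield the required relative precision on the quotients $b_i(\alpha)/b_n(\alpha)$ that Sagraloff's algorithm consumes.
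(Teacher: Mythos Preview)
Your proposal is correct and follows essentially the same argument as the paper: invoke Sagraloff's Theorem~19 as a black box, plug in $\Sigma(B_{\alpha}) = \sO(mn(\tau+\sigma))$ from Lemma~\ref{lem:Ba-bounds} and $\tau_B = \sO(m(\sigma+\tau))$ from (\ref{eq:bi-ineq}) to get the isolation cost $\sOB(m^2n^3(\sigma^2+\tau^2))$, then bound the approximation cost via the cited lemmata of \cite{k-casc-09} to obtain $\sO(m^3(m\tau^2 + n\sigma + n\tau))$. Your remark that the lower bound $|b_i(\alpha)| \geq 2^{-\tau_B}$ is the point justifying absolute-to-relative precision conversion matches exactly the subtlety the paper singles out.
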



\subsection{\label{Sec:Mignotte}Almost tight separation bounds}

Let $\alpha$ be the root of
$A(x) = x^m - a x^{m-1} - 1$,
in $(a, a+1)$, for $a \geq 3$, $m \geq 3$. 
Then the Mignotte polynomial
$ B_{\alpha}(y) = y^n - 2(\alpha^k y-1)^2 $,
where $k=\floor{(m-1)/2}$,
has two roots in $(1/\alpha^k-h, 1/\alpha^k+h)$, where
$h = \alpha^{-k(n+2)/2} < a^{-(m-2)(n+2)/4} $.

If $a \leq 2^{\tau}$ and $\tau = \Omega( \lg(mn))$, then
$-\lg{\Delta(B_{\alpha})} = \Omega(mn\tau)$, which matches the upper
bound in (\ref{eq:Ba-sep}) of Lem.~\ref{lem:Ba-bounds}.  This
quantity, $\Omega(mn\tau)$, is also a tight lower bound for the number
of steps that an subdivision based algorithm performs, following the
arguments used in \cite{ESY:descartes} to prove a similar bound for
polynomials with integer coefficients.

\section{Implementation and experiments}
\label{sec:implementation}

We compare implementations of two methods of real root isolation for
square-free polynomials over simple algebraic extensions of rationals.
The first method, \emph{ICF} (for Integer Continued Fractions), performs
reduction to integer coefficients described in Section \ref{Sec:ICF}.
For isolating roots of polynomials with integer coefficients it uses
the \mathematica implementation of the Continued Fractions algorithm
\cite{LANA-2005}. The second method, \emph{BMD} (for Bitstream Modified
Descartes), uses Sagraloff's modified version of Descartes' algorithm
(\cite{s-arxiv-isol-10}, see Section \ref{Sec:BMD}). 
The algorithm has been implemented in C as a part of the \mathematica
system. 

For the experiments we used a 64-bit Linux virtual machine with
a $3$ GHz Intel Core i7 processor and $6$ GB of RAM. The timings
are in given seconds. Computations that did not finish in $10$ hours
of CPU time are reported as $>36000$.

\paragraph*{Randomly generated polynomials}
For given values of $m$ and $n$ each instance was generated as follows.
First, univariate polynomials of degree $m$ with uniformly distributed
random $10$-bit integer coefficients were generated until an irreducible
polynomial which had real roots was obtained. A real root $r$ of
the polynomial was randomly selected as the extension generator. Finally,
a polynomial in $\mathbb{Z}[r,y]$ of degree $n$ in $y$ and degree
$m-1$ in $r$ with $10$-bit random integer coefficients was generated.
The results of the experiments are given in Table \ref{Random-polynomials}.
Each timing is an average for $10$ randomly generated problems. 

\begin{table}[t] 
\scriptsize
  \centering
\begin{tabular}{|c|c|c|c|c|c|c|}
\hline 
 $n$& Algorithm & $m=2$ & $m=3$ & $m=5$ & $m=10$ & $m=20$\tabularnewline
\hline
\hline 
$10$ & $ICF$ & $0.003$ & $0.006$ & $0.013$ & $0.082$ & $0.820$\tabularnewline
\cline{2-7} 
 & $BMD$ & $0.002$ & $0.002$ & $0.003$ & $0.006$ & $0.019$\tabularnewline
\hline 
$20$ & $ICF$ & $0.004$ & $0.010$ & $0.048$ & $1.49$ & $2.80$\tabularnewline
\cline{2-7} 
 & $BMD$ & $0.008$ & $0.008$ & $0.010$ & $0.017$ & $0.053$\tabularnewline
\hline 
$50$ & $ICF$ & $0.014$ & $0.044$ & $0.271$ & $8.29$ & $20.5$\tabularnewline
\cline{2-7} 
 & $BMD$ & $0.046$ & $0.050$ & $0.061$ & $0.079$ & $0.213$\tabularnewline
\hline 
$100$ & $ICF$ & $0.047$ & $0.173$ & $1.09$ & $33.1$ & $108$\tabularnewline
\cline{2-7} 
 & $BMD$ & $0.165$ & $0.206$ & $0.137$ & $0.246$ & $0.546$\tabularnewline
\hline
$200$ & $ICF$ & $0.144$ & $0.612$ & $4.90$ & $141$ & $626$\tabularnewline
\cline{2-7} 
 & $BMD$ & $0.746$ & $0.701$ & $1.00$ & $0.824$ & $1.55$\tabularnewline
\hline
\end{tabular}
\caption{\label{Random-polynomials}Randomly generated polynomials}
\end{table}

\paragraph*{Generalized Laguerre Polynomials}
This example compares the two root isolation methods for generalized
Laguerre polynomials $L_{n}^{\alpha}(x)$, where $\alpha$ was chosen
to be the smallest root of the Laguerre polynomial $L_{m}(x)$. Note
that $L_{n}^{\alpha}(x)$ has $n$ positive roots for any positive
$\alpha$ and $L_{m}(x)$ has $m$ positive roots, so this example
maximizes the number of real roots of both the input polynomial with
algebraic number coefficients and the polynomial with integer coefficients
obtained by \emph{ICF}. The results of the experiment are given in
Table \ref{Laguerre polynomials}.

\begin{table}[t]
\centering
\scriptsize
\begin{tabular}{|c|c|c|c|c|c|c|}
\hline 
$n$ & Algorithm & $m=2$ & $m=3$ & $m=5$ & $m=10$ & $m=20$\tabularnewline
\hline
\hline 
$10$ & $ICF$ & $0.011$ & $0.008$ & $0.032$ & $0.208$ & $1.75$\tabularnewline
\cline{2-7} 
 & $BMD$ & $0.007$ & $0.007$ & $0.009$ & $0.010$ & $0.015$\tabularnewline
\hline 
$20$ & $ICF$ & $0.019$ & $0.041$ & $0.193$ & $1.50$ & $13.9$\tabularnewline
\cline{2-7} 
 & $BMD$ & $0.075$ & $0.071$ & $0.080$ & $0.088$ & $0.106$\tabularnewline
\hline 
$50$ & $ICF$ & $0.122$ & $0.270$ & $1.51$ & $25.8$ & $338$\tabularnewline
\cline{2-7} 
 & $BMD$ & $1.78$ & $1.63$ & $1.83$ & $1.90$ & $2.27$\tabularnewline
\hline 
$100$ & $ICF$ & $0.834$ & $2.17$ & $16.1$ & $365$ & $10649$\tabularnewline
\cline{2-7} 
 & $BMD$ & $54.7$ & $51.3$ & $56.0$ & $74.7$ & $92.4$\tabularnewline
\hline
$200$ & $ICF$ & $7.53$ & $31.2$ & $246$ & $8186$ & $>36000$\tabularnewline
\cline{2-7} 
 & $BMD$ & $2182$ & $3218$ & $3830$ & $4280$ & $4377$\tabularnewline
\hline
\end{tabular}
\caption{\label{Laguerre polynomials}Generalized Laguerre polynomials}
\end{table}

\paragraph*{Generalized Wilkinson Polynomials}
This example uses the following generalized Wilkinson polynomials
$W_{n,\alpha}(x):= \prod_{k=1}^{n}(x-k\alpha) , $
where $\alpha$ is the smallest root of the Laguerre polynomial $L_{m}(x)$.
The timings are presented in Table \ref{Wilkinson polynomials}. 

\begin{table}[t]
\centering
\scriptsize
\begin{tabular}{|c|c|c|c|c|c|c|}
\hline 
 $n$& Algorithm & $m=2$ & $m=3$ & $m=5$ & $m=10$ & $m=20$\tabularnewline
\hline
\hline 
$10$ & $ICF$ & $0.017$ & $0.012$ & $0.035$ & $0.285$ & $2.09$\tabularnewline
\cline{2-7} 
 & $BMD$ & $0.015$ & $0.013$ & $0.011$ & $0.015$ & $0.008$\tabularnewline
\hline 
$20$ & $ICF$ & $0.029$ & $0.069$ & $0.262$ & $2.23$ & $18.3$\tabularnewline
\cline{2-7} 
 & $BMD$ & $0.059$ & $0.052$ & $0.069$ & $0.039$ & $0.027$\tabularnewline
\hline 
$50$ & $ICF$ & $0.137$ & $0.356$ & $2.04$ & $45.4$ & $429$\tabularnewline
\cline{2-7} 
 & $BMD$ & $1.84$ & $1.35$ & $1.29$ & $0.703$ & $0.561$\tabularnewline
\hline 
$100$ & $ICF$ & $0.808$ & $2.84$ & $24.6$ & $674$ & $8039$\tabularnewline
\cline{2-7} 
 & $BMD$ & $47.0$ & $38.6$ & $32.0$ & $23.3$ & $8.38$\tabularnewline
\hline
$200$ & $ICF$ & $8.48$ & $35.1$ & $348$ & $11383$ & $>36000$\tabularnewline
\cline{2-7} 
 & $BMD$ & $3605$ & $2566$ & $2176$ & $927$ & $565$\tabularnewline
\hline
\end{tabular}
\caption{\label{Wilkinson polynomials}Generalized Wilkinson polynomials}
\end{table}

\paragraph*{Mignotte Polynomials}

The variant of Mignotte polynomials used in this example is given
by $M_{n,\alpha}(x):=y^{n}-2(\alpha^{k}y-1)^{2}$,
where $\alpha$ is the root of $A_{m}(x):=x^{m}-3x^{m-1}-1$
in $(3,4)$, $m\geq3$ and $k=\floor{(m-1)/2}$ (see Section \ref{Sec:Mignotte}).
The results of the experiment are given in Table \ref{Mignotte polynomials}. 
\begin{table}[t]
\centering
\scriptsize
\begin{tabular}{|c|c|c|c|c|c|}
\hline 
 $n$& Algorithm & $m=3$ & $m=5$ & $m=10$ & $m=20$\tabularnewline
\hline
\hline 
$10$ & $ICF$ & $0.003$ & $0.008$ & $0.049$ & $0.594$\tabularnewline
\cline{2-6} 
 & $BMD$ & $0.010$ & $0.006$ & $0.014$ & $0.036$\tabularnewline
\hline 
$20$ & $ICF$ & $0.006$ & $0.027$ & $0.288$ & $8.83$\tabularnewline
\cline{2-6} 
 & $BMD$ & $0.015$ & $0.020$ & $0.049$ & $0.137$\tabularnewline
\hline 
$50$ & $ICF$ & $0.041$ & $0.441$ & $12.2$ & $777$\tabularnewline
\cline{2-6} 
 & $BMD$ & $0.112$ & $0.147$ & $0.321$ & $0.854$\tabularnewline
\hline 
$100$ & $ICF$ & $0.866$ & $11.6$ & $729$ & $28255$\tabularnewline
\cline{2-6} 
 & $BMD$ & $0.702$ & $0.868$ & $2.32$ & $5.99$\tabularnewline
\hline
$200$ & $ICF$ & $35.7$ & $684$ & $23503$ & $>36000$\tabularnewline
\cline{2-6} 
 & $BMD$ & $3.12$ & $5.30$ & $13.8$ & $46.1$\tabularnewline
\hline
\end{tabular}
\caption{\label{Mignotte polynomials}Mignotte polynomials}
\end{table}

\vspace{5pt}

The experiments suggest that for low degree extensions \emph{ICF}
is faster than \emph{BMD}, but in all experiments as the degree of
extension grows \emph{BMD} becomes faster than \emph{ICF}. Another
fact worth noting is that \emph{ICF} depends directly on the extension
degree $m$, since it isolates roots of a polynomial of degree $mn$.
On the other hand, the only part of \emph{BMD} that depends directly
on $m$ is computing approximations of coefficients, which in practice
seems to take a very small proportion of the running time. The main
root isolation loop depends only on the geometry of roots, which depends
on $m$ only through the worst case lower bound on root separation.
Indeed, in all examples the running time of \emph{ICF} grows substantially
with $m$, but the running time of \emph{BMD} either grows at a much
slower pace or, in case of generalized Wilkinson polynomials, it even
decreases with $m$ (because the smallest root $\alpha$ of $L_{m}(x)$,
and hence the root separation of $W_{n,\alpha}(x)$, increase with $m$).
The superiority of the direct approach was also observed in \cite{Johnson-phd-91}.

\section{Multiple extensions}
\label{sec:multi}

In this section we consider the problem of real root isolation of a
polynomials with coefficients in multiple extensions. We tackle the
problem using a reduction to a polynomial with integer
coefficients. The technique could be considered as a generalization of
the one presented in Sec.~\ref{sec:RIC}.

We use ${\x}^{\mathbf{ e}}$ to denote the monomial
$x_1^{e_1} \cdots x_n^{e_{\ell}}$, with
$\mathbf{ e} = (e_1, \dots, e_{\ell}) \in \NN^{\ell}$.
For a polynomial 
$f = \sum_{j=1}^{m}{ c_{j} {\x}^{\mathbf{e}_{j}}} \in \ZZ[\x]$,
let $\Set{ \mathbf{e}_{1}, \dots, \mathbf{e}_{m}} \subset \NN^{\ell}$ be the support of $f$;
its Newton polytope $Q$ is the convex hull of the support.
By $(\#Q)$ we denote the integer points of the polytope $Q$,
i.e. $(\#Q) = |Q \cap \ZZ^{\ell}|$.

%
\begin{framed}
\begin{problem}
  \label{prob:multi-ext}
  Let $\alpha_j$, where $1 \leq j \leq \ell$,
  be a real algebraic numbers. Their  isolating
  interval representation is $\alpha_j \cong ( A_j, \interval{I}_j)$, where
  $A_j=\sum_{i=0}^{m}{ a_i\, x_j^i}$, $\interval {I}_j = [ \rat{a}_{j,1}
  ,\rat{a_{j,2}}]$, $\rat{ a}_{1,2} \in \QQ$,
  $\dg{ A_j} = m$, and $\bitsize{ A_j} = \tau$.
  Let 
  \begin{displaymath}
    B_{\alpha} = 
    \sum_{i=0}^{n}{ b_i(\alpha_1, \dots, \alpha_{\ell}) \, y^i} 
    \in \ZZ( \alpha)[  y] ,
  \end{displaymath}  
  be square-free, where
  $b_i(\x) = \sum_{\mathbf{e}}{ c_{ij} \, \x^{\mathbf{e}}} \in \ZZ[\x]$,
  $\bitsize{c_{i,j}} \leq \sigma$,  for $0 \leq i \leq d$,
  and for  
  $\mathbf{ e} = (e_1, \dots, e_{\ell})$, it holds $e_j \leq \eta < m$, 
  What is the Boolean complexity of isolating the real roots of $B_{\alpha}$?
\end{problem}
\end{framed}

We denote by $\mathbf{a}_i$ the coefficients of $A_i$, where $1\leq i
\leq \ell$, and by $\mathbf{c}$ the coefficients of $B$. 
We compute separation bounds following the technique introduced 
\cite{emt-issac-2010}.

We consider the zero dimensional polynomial system 
$(S): A_1(\x) = \cdots = A_{\ell}(\x) = A_{\ell+1}(\x) = 0$,
where $A_k(\x)  = \sum_{i=0}^{m}{ a_{k,i}\, x_k^i} = 0$, $1 \leq k \leq \ell$,
and $A_{\ell+1}=B(\x, y) = \sum_{i=0}^{n}{ b_i(x_1, \dots, x_{\ell}) \, y^i} =0$.
We should mention that we make the assumption that $B$ does not become
identically zero when $\alpha_1, \dots,  \alpha_l$ are replaced with some set
of their conjugates (otherwise the resultant is zero). 

We hide variable $y$, that is we consider $(S)$ as an overdetermined
system of $\ell+1$ equations in $\ell$ variables.
We consider the resultant, $R$, with respect to $x_1, \dots, x_{\ell}$, that is
we eliminate these variables, and we obtain a polynomial 
$R \in \ZZ[\mathbf{a}_1, \dots, \mathbf{a}_{\ell}, \mathbf{c}, y ]$.
We interpret the resultant using the Poisson formula \cite{CLO2}, see also \cite{PeSt}, 
i.e.
\begin{displaymath}
  R(y) = \res_x( A_1, \dots, A_{\ell}, B) =
  \prod B( \alpha_{1,i_1}, \dots, \alpha_{\ell, i_{\ell}}, y)
  \enspace,
\end{displaymath}
and  $R(y) \in (\ZZ[\mathbf{a}_1, \dots, \mathbf{a}_{\ell}, \mathbf{c}])[y]$.
Similar to the single extension case, $B_{{\bm \alpha}}$, is among the
factors of $R$, hence it suffices to compute bounds for the roots of
$R(y)$.

We consider $R$ as a univariate polynomial in $y$.  The resultant is a
homogeneous polynomial in the coefficients of $(S)$, we refer to
e.g. \cite{CLO2,PeSt} for more details and to \cite{emt-issac-2010}
for a similar application. To be more specific, the structure of the
coefficients of $R$ is
\begin{displaymath}
  R(y) = \cdots + \varrho_k \, \mathbf{a}_1^{\M_1} \cdots  \mathbf{a}_{\ell}^{\M_{\ell}}
  \, \mathbf{c}^{\M_{\ell+1}-k} (y^i)^{k} + \cdots
  \enspace ,
\end{displaymath}
where $1 \leq k \leq \M_{\ell+1} = m^{\ell}$,
and $i$ is a number in $\set{1, \dots, n}$. The semantics of
$\mathbf{a}_i^{\M_i}$ are that it is a monomial in the coefficients of
$A_i$ of total degree $\M_i$.  Similarly, $\mathbf{c}^{\M_{\ell+1}-k}$
stands for a monomial in the coefficients of $B$ of total degree
$\M_{\ell+1}-k$.
Moreover, $\M_i \leq \ell \eta m^{\ell-1} < \ell(m-1)m^{\ell-1} < \ell m^{\ell}$.
The degree of $R$ with respect to $y$ is at most $n\,\M_{\ell+1}= n m^{\ell}$.

Since $\abs{a_{i,j}} \leq 2^{\tau}$,
it holds
\begin{equation}
  \lg \prod_{i=i}^{\ell} \abs{\mathbf{a}_i}^{\M_i} \leq \tau \ell^2 m^{\ell}
  \enspace .
  \label{eq:acoeff-bd}
\end{equation}

Similarly, since $\abs{c_{i,j}} \leq 2^{\sigma}$, we get
\begin{equation}
  \lg \abs{\mathbf{c}}^{\M_{\ell+1}-k} \leq \sigma (m^{\ell} - k) \leq
  \sigma m^{\ell} 
  \enspace .
  \label{eq:ccoeff-bd}
\end{equation}

Finally,
$\abs{\varrho_k} \leq \prod_{i=1}^{\ell+1}{(\#Q_{i})^{\M_i}}$  \cite{sombra-ajm-2004},
where $(\#Q_i)$ is the number of integer points of the Newton polytope
of the polynomial $A_i$. We let $A_{\ell+1} = B$.
It is $(\#Q_i) = m+1$ for $1 \leq i \leq \ell$, so
$  \prod_{i=1}^{\ell}{(\#Q_{i})^{\M_i}} \leq (m+1)^{\ell (m-1) m^{\ell-1}}
  \leq  m^{\ell m^{\ell}}$ ,
and $(\#Q_{\ell+1}) \leq (\ell(m-1)+n)^{\ell+1} + \ell+1$.
Hence,
\begin{displaymath}
  \begin{aligned}
    (\#Q_i)^{\M_{\ell+1}} && \leq &\Paren{(\ell(m-1)+n)^{\ell+1} + \ell+1}^{m^{\ell}} \\
    &&\leq &\Paren{2 \ell m +n}^{(\ell+1)m^{\ell}}
    \leq \Paren{\ell m n}^{\ell m^{\ell}} \enspace ,
  \end{aligned}
\end{displaymath}
and so for every $k$
\begin{equation}
  \lg{\abs{\varrho_k}} \leq \lg {\prod_{i=1}^{\ell+1}{(\#Q_{i})^{\M_i}}}
  \leq {2\ell m^{\ell} \lg(mn\ell)}
  \enspace .
    \label{eq:rhocoeff-bd}
\end{equation}

By combining (\ref{eq:acoeff-bd}), (\ref{eq:ccoeff-bd}) and
(\ref{eq:rhocoeff-bd}) we can bound the coefficients of $R$ and its
square-free factors.
Using also Prop.~\ref{prop:sep-bounds} we get the following lemma.

\begin{lemma}
  \label{lem:a-multi-bounds}
  Let $B_{\alpha}$ be as in Problem~\ref{prob:multi-ext}.  The minimal
  polynomial, $C_{\ell}$ of the, possible complex, roots of $B_{\alpha}$,
  $\gamma_i$, has degree $\leq n\, m^{\ell}$ and bitsize 
  $\leq m^{\ell}(\tau \ell^2  + \sigma  + 3\ell\lg(mn\ell))$ or 
  $\sO(m^{\ell}(\ell^2\tau +\sigma))$.
  Moreover, it holds 
  \begin{eqnarray}
    |\gamma_i| &\leq & 2^{m^{\ell}( \ell^2\tau + \sigma + 2\ell\lg(mn\ell))}  \enspace,  \label{eq:am-upper} \\
    -\lg{ \Delta(C_{\ell}) } &\leq& 
    m^{2\ell}n( \ell^2 \tau + \sigma + 4\ell\lg(mn\ell))
    \enspace,  \label{eq:am-sep} \\
    -\sum_{i}\lg{ \Delta_i(C_{\ell}) } &\leq & 
    m^{2\ell}n( \ell^2 \tau + \sigma + n + 6\ell\lg(mn\ell))
    \label{eq:am-dmm}
  \end{eqnarray}
  \begin{eqnarray}
    |\gamma_i| & \leq & 2^{\sO(m^{\ell}( \ell^2 \tau + \sigma))} 
    \enspace, \label{eq:am-upper-asympt} \\
    -\lg{ \Delta(C_{\ell}) } & = & \sO(m^{2\ell} n (\ell^2\tau +\sigma)) \enspace, \label{eq:am-sep-asympt} \\
    -\sum_{i}\lg{ \Delta_i(C_{\ell}) } &= & \sO(m^{2\ell} n(\ell^2\tau +\sigma +n))
    \enspace. \label{eq:am-dmm-asympt}
  \end{eqnarray}
\end{lemma}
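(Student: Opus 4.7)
The plan is to apply Proposition~\ref{prop:sep-bounds} to the minimal polynomial $C_{\ell}$ of the roots of $B_{\alpha}$, after establishing bounds on $\deg(C_{\ell})$ and $\bitsize{C_{\ell}}$. Since every root of $B_{\alpha}$ appears as a root of $R(y)$ (because $B_{\bm\alpha}$ is one of the Poisson factors of $R$), $C_{\ell}$ is a square-free factor of $R$, so it suffices to bound $\deg(R)$ and $\bitsize{R}$ and then invoke Mignotte's factor bound.

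First, I would collect the data already established for $R(y)$. The coefficient structure
$$
R(y) = \sum_{k} \varrho_k\, \mathbf{a}_1^{\M_1}\cdots \mathbf{a}_{\ell}^{\M_{\ell}}\, \mathbf{c}^{\M_{\ell+1}-k}\, (y^i)^k
$$
together with the bounds (\ref{eq:acoeff-bd}), (\ref{eq:ccoeff-bd}) and (\ref{eq:rhocoeff-bd}) gives
$$
\bitsize{R} \leq \ell^2 \tau\, m^{\ell} + \sigma\, m^{\ell} + 2\ell\, m^{\ell}\lg(mn\ell),
$$
after absorbing the (polynomially many) terms of the sum into a logarithmic contribution. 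The degree bound $\deg_y(R) \leq n\, m^{\ell}$ was already noted, so $\deg(C_{\ell}) \leq n\,m^{\ell}$. Applying Mignotte's factor bound to pass from $R$ to its square-free factor $C_{\ell}$ adds at most $\deg(R)$ bits, which accounts for the promotion of $2\ell$ to $3\ell$ in the coefficient of $\lg(mn\ell)$ in the stated inequality, giving
$$
\bitsize{C_{\ell}} \leq m^{\ell}\bigl(\ell^2 \tau + \sigma + 3\ell\lg(mn\ell)\bigr) = \sO\bigl(m^{\ell}(\ell^2\tau + \sigma)\bigr).
$$

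Second, with $p := \deg(C_{\ell}) \leq n\, m^{\ell}$ and $\tau_C := \bitsize{C_{\ell}}$ as above, I would substitute into the three inequalities of Proposition~\ref{prop:sep-bounds}. The Cauchy-type bound (\ref{eq:u-upper}) directly yields (\ref{eq:am-upper}); the single-root separation bound $-\lg\Delta(C_\ell) \leq 2p\lg p + p\tau_C$ from (\ref{eq:u-sep}) yields (\ref{eq:am-sep}); and the aggregate bound $3p^2 + 3p\tau_C + 4p\lg p$ from (\ref{eq:u-dmm}) yields (\ref{eq:am-dmm}). Dropping the logarithmic factors delivers the asymptotic statements (\ref{eq:am-upper-asympt})--(\ref{eq:am-dmm-asympt}).

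The main obstacle here is purely bookkeeping: carefully tracking the contributions of $\ell$, $m$, $n$, $\tau$, $\sigma$ through (\ref{eq:acoeff-bd})--(\ref{eq:rhocoeff-bd}), correctly accounting for the additive loss in Mignotte's bound when passing from $R$ to the square-free factor $C_{\ell}$, and verifying that the product $p\cdot \tau_C$ telescopes into the exponent $m^{2\ell}$ that appears in the stated separation inequalities. Care is also needed because we silently assume the nondegeneracy hypothesis mentioned in the excerpt (that $B$ does not vanish identically when the $\alpha_j$ are replaced by conjugates), so that $R$ is nonzero and the Poisson factorization is meaningful; under this assumption the argument above goes through without modification.
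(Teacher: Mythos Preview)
Your proposal is correct and follows essentially the same approach as the paper: the paper's argument is precisely to combine the bounds (\ref{eq:acoeff-bd}), (\ref{eq:ccoeff-bd}), (\ref{eq:rhocoeff-bd}) on the coefficients of $R$, pass to the square-free factor via Mignotte's bound, and then apply Proposition~\ref{prop:sep-bounds}. You have spelled out the bookkeeping in more detail than the paper does (the paper compresses the whole derivation into a single sentence before stating the lemma), and you have also correctly flagged the standing nondegeneracy assumption on $B$.
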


\begin{remark}
  To match exaclty the bounds derived in Lem.~\ref{lem:a-bounds} one  should use
  for $\M_i$ the more accurate inequality $\M_i < \ell(m-1) m^{\ell-1}$.
\end{remark}

We can isolate the real roots of $C_{\ell}$ in 
$\sOB( n^5 m^{5\ell} +
 n^4 m^{5\ell}\tau\ell^2 +
 n^4 m^{4\ell} \sigma +
+n^3 m^{5\ell}\tau^2\ell^4 +
+n^3 m^{3\ell} \sigma^2)$.
That is we get a single
exponential bound with respect to the number of the real algebraic
numbers involved.


{
\subsection*{Acknowledgement}
ET is partially supported by an individual postdoctoral grant from the
Danish Agency for Science, Technology and Innovation,
and also acknowledges support from the Danish National Research Foundation and
the National Science Foundation of China (under the grant 61061130540)
for the Sino-Danish Center for the Theory of Interactive Computation,
within which part of this work was performed.
}


{ 
  \bibliographystyle{abbrv}  
  \bibliography{rsef} 
}

\newpage
\appendix

\section{A bound for the resultant}

\begin{proposition}
  \label{prop:sbiv-res-bounds}
  Let $B=\sum_{i,j}c_{i,j}x^{i} y^j \in\ZZ[x,y]$ of degree $n$ with
  respect to $y$ and of degree $\eta$ with respect to $x$, and of
  bitsize $\sigma$.  Let $A =\sum_{i=0}^{m}{a_ix^i} \in \ZZ[x]$ of
  degree $m$ and bitsize $\tau$. The resultant of $B$ and $A$ with
  respect to $x$ is univariate polynomial in $y$ of degree at most
  $mn$ and bitsize at most $m\sigma + \eta \tau +  m\lg(n+1) + (m+\eta)\lg(m +\eta)$ 
  or $\sO(m\sigma + \eta\tau)$.
\end{proposition}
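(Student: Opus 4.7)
The plan is to realize the resultant as the determinant of the Sylvester matrix of $B$ and $A$ viewed as polynomials in $x$ with coefficients in $\ZZ[y]$ (for $B$) and $\ZZ$ (for $A$), and then to track degrees, coefficient magnitudes, and combinatorial factors through the Leibniz expansion of that determinant.

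First, I would write $B(x,y) = \sum_{i=0}^{\eta} B_i(y)\, x^i$ with $B_i(y) \in \ZZ[y]$ of degree at most $n$ and $\infty$-norm at most $2^\sigma$. The Sylvester matrix $\mathrm{Syl}_x(B,A)$ is a square matrix of size $(m+\eta)\times(m+\eta)$: it has $m$ rows populated by shifted copies of $(B_\eta,\dots,B_0)$ and $\eta$ rows populated by shifted copies of $(a_m,\dots,a_0)$. Thus every term $\prod_i M_{i,\pi(i)}$ in the Leibniz expansion of $\det \mathrm{Syl}_x(B,A)$ is a product of exactly $m$ entries from the $B$-rows (polynomials in $y$) and $\eta$ entries from the $A$-rows (integers). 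This immediately gives the degree bound: such a term has $y$-degree at most $m n$, hence $\deg_y \res_x(B,A) \leq mn$.

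Next I would control the bitsize of a single Leibniz term. Using the standard polynomial-product bound $\|p\cdot q\|_\infty \le (\min(\deg p,\deg q)+1)\,\|p\|_\infty \|q\|_\infty$, an iterated product of $m$ polynomials of degree $\le n$ and $\infty$-norm $\le 2^\sigma$ has $\infty$-norm at most $(n+1)^{m-1} 2^{m\sigma}$, i.e. bitsize at most $m\sigma + (m-1)\lg(n+1)$. Multiplying by the $\eta$ integer factors of bitsize $\tau$ adds $\eta\tau$, yielding $m\sigma + \eta\tau + (m-1)\lg(n+1)$ per Leibniz term. Finally, the determinant is a signed sum of at most $(m+\eta)!$ such terms, so summing over the expansion adds at most $\lg((m+\eta)!) \le (m+\eta)\lg(m+\eta)$ to the bitsize, producing the claimed bound $m\sigma+\eta\tau + m\lg(n+1) + (m+\eta)\lg(m+\eta)$, which is $\sO(m\sigma + \eta\tau)$ after absorbing the logarithmic terms.

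The estimate is essentially elementary; the only place that requires a little care is the per-term bitsize, where one must avoid the naive (and too weak) bound that iterates $\|pq\|_\infty \le (\deg(pq)+1)\|p\|_\infty\|q\|_\infty$ and instead use the $\min$-degree version so that the combinatorial overhead is $(n+1)^{m-1}$ rather than something growing with the intermediate degrees. I would probably just record the cleaner inequality $\|B_{i_1}\cdots B_{i_m}\|_\infty \le (n+1)^{m-1}\prod_j \|B_{i_j}\|_\infty$ and appeal to it directly, so that all the multiplicative $n$-factors collapse into the single $m\lg(n+1)$ contribution. Nothing in the argument uses $A,B$ being square-free or coprime, so the bound holds unconditionally, as stated.
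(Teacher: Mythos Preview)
Your argument is correct and essentially identical to the paper's: both realize $\res_x(B,A)$ as the determinant of the $(m+\eta)\times(m+\eta)$ Sylvester matrix, expand by Leibniz, and observe that each term is a product of $m$ polynomials in $y$ (degree $\le n$, bitsize $\le\sigma$) times $\eta$ integers of bitsize $\le\tau$, then add $\lg((m+\eta)!)$ for the sum. Your use of the $\min$-degree product bound yields the slightly sharper factor $(n+1)^{m-1}$ per term versus the paper's $(n+1)^m$, but this is absorbed into the stated bound $m\lg(n+1)$ anyway.
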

\begin{proof}
  The proof follows closely the proof in
  \cite[Prop.~8.15]{BPR06} that provides a bound for general
  multivariate polynomials. 
  We can compute the resultant of $B(x, y)$ and $A(x)$ with respect to
  $x$ from the determinant of the Sylvester matrix, by
  considering them as univariate polynomial in $x$, with coefficients
  that are polynomial in $y$, which is 
  {
  \begin{displaymath}
    \left( 
      \begin{array}{ccccccc}
        b_\eta & b_{\eta-1} & \dots & b_0 &&&\\
        & b_\eta & b_{\eta-1} & \dots & b_0 &&\\
        && \ddots & \ddots && \ddots &\\
        &&& b_\eta & b_{\eta-1} & \dots & b_0 \\
        a_m & a_{m-1} & \dots & a_0 &&&\\
        & a_m & a_{m-1} & \dots & a_0 &&\\
        && \ddots & \ddots && \ddots &\\
        &&& a_m & a_{m-1} & \dots & a_0
      \end{array}
    \right) \,
    \begin{array}{c}
      x^{m-1} B \\
      x^{m-2} B \\
      \vdots \\
      x^0 B \\
      x^{\eta-1} A\\
      x^{\eta-2} A \\
      \vdots \\
      x^0 A \\
    \end{array}
  \end{displaymath}
  }
  where $b_k = \sum_{i=0}^{n}c_{i,k}y^i$.

  The resultant is a factor of the determinant of the Sylvester
  matrix. The matrix is of size $(\eta+m) \times (\eta + m)$, hence
  the determinant consists of $(\eta+m)!$ terms.  Each term is a
  product of $m$ univariate polynomials in $y$, of degree $n$ and
  bitsize $\sigma$, times the product of $n$ numbers, of bitsize
  $\tau$.  The first product results in polynomials in $y$ of degree
  at most $mn$ and bitsize at most $m\sigma + m\lg(n+1)$; since there
  are at most $(n+1)^m$ terms with bitsize at most $m\sigma$ each. The
  second product results in numbers of bitsize at most $\eta
  \tau$. Hence each term of the determinant is, in the worst case a
  univariate polynomial in $y$ of degree $m$ and bitsize $m\sigma +
  \eta \tau + m\lg(n+1)$.  We conclude that the resultant is of degree
  at most $mn$ in $y$ and of bitsize 
  $m\sigma + \eta \tau + m\lg(n+1) + (m+\eta)\lg(m +\eta)$ or $\sO(m\sigma + \eta\tau)$. 
\end{proof}

\end{document}